\newtheorem{theorem}{Theorem}
\newtheorem{definition}[theorem]{Definition}
\newtheorem{proposition}[theorem]{Proposition}
\newtheorem{lemma}[theorem]{Lemma}
\newtheorem{corollary}[theorem]{Corollary}
\newcommand{\db}[1]{\mathopen{\{\!\!\{}#1\mathopen{\}\!\!\}}}
\begin{document}
\author{Semeon Arthamonov}
\address{Department of Mathematics, Rutgers, The State University of New Jersey,\\
110 Frelinghuysen Rd., Piscataway, NJ 08854, USA.
}
\email{semeon.artamonov@rutgers.edu}
\title{Noncommutative Inverse Scattering Method for the Kontsevich system}
\begin{abstract}
We formulate an analog of Inverse Scattering Method for integrable systems on noncommutative associative algebras. In particular we define Hamilton flows, Casimir elements and noncommutative analog of the Lax matrix. The noncommutative Lax element generates infinite family of commuting Hamilton flows on an associative algebra. The proposed approach to integrable systems on associative algebras satisfy certain universal property, in particular it incorporates both classical and quantum integrable systems as well as provides a basis for further generalization.

We motivate our definition by explicit construction of noncommutative analog of Lax matrix for a system of differential equations on associative algebra recently proposed by Kontsevich. First we present these equations in the Hamilton form by defining a bracket of Loday type on the group algebra of the free group with two generators. To make the definition more constructive we utilize (with certain generalizations) the Van den Bergh approach to Loday brackets via double Poisson brackets. We show that there exists an infinite family of commuting flows generated by the noncommutative Lax element.
\end{abstract}

\maketitle


\tableofcontents

\section{Introduction}

An idea of studying Integrable Systems on associative algebras was originated by Gelfand school in the late 1970s \cite{GelfandDorfman'1981},\cite{DorfmanFokas'1992}. There exist several different approaches to noncommutative integrability (see, for example \cite{Krichever'1981, Kontsevich'1993, EtingofGelfandRetakh'1997, OlverSokolov'1998, EtingofGelfandRetakh'1998, MasashiToda'2003, RetakhRubtsov'2010}). In this paper we use the approach developed by Mikhailov and Sokolov. In paper \cite{MikhailovSokolov'2000} they suggested to consider Lax operator $L$ of integrable system as an element of some noncommutative associative algebra $\mathcal A$ instead of taking a matrix of a particular size. In this approach we can introduce the equation of the spectral curve as a polynomial relation on $L\in\mathcal A$ of the form $P(L)=0$. Then the passage to the usual Lax matrix is equivalent to represent a quotient of algebra $\mathcal A$ by the ideal generated by the spectral curve $P(L)=0$, followed by the choice of representation of $\mathcal A/(P(L))$. In this paper we show that the idea of integrability can be formulated in universal terms for associative algebra $\mathcal A$ itself, independently of the particular choice of representation. This type of equations should naturally incorporate known classical and quantum integrable systems as well as provide a basis for further generalization.

Classical Lax matrix $L(z)$ with a spectral parameter provides generating functions $\textrm{Tr} (L(z))^k$ for Hamiltonians of the system, which in turn generate a complete set of the first integrals of motion. So, to use all advantages of the powerful Inverse Scattering Method \cite{FadeevTakhtajan'1986, SokolovShabat'1984, MikhailovShabatYamilov'1987, Fokas'1987} in the noncommutative case one have to define trace first. The proper candidate for the role of codomain of traces in the case of associative algebras is the so-called 0-Hochschild homology.
We know that in the case of $M\in Mat(N,\mathbb C)$ the trace of $M^k$ is invariant under conjugation of $M$ by elements of $GL(N,\mathbb C)$, so it seems natural to define the trace as an equivalence class under conjugation with generators of associative algebra\footnote{More precisely, the set of $\textrm{Tr}\,M^k,\;1\leqslant k\leqslant N$ parameterizes generic orbit of the coadjoint action of $GL(N,\mathbb C)$. We can view this orbit as an equivalence class of matrices modulo conjugation which motivates this definition.}. The quotient of the associative algebra modulo these equivalence relations we call a \textit{cyclic space}. The accurate definitions of the corresponding notions are given in Sec. \ref{sec:CyclicSpace}. The cyclic space has no longer a structure of associative algebra, however it appears that it can be endowed with a Lie bracket \cite{Kontsevich'1993, VandenBergh'2008, CrawleyBoevey'2011}.

The latter brings us to the point that Hamiltonians of integrable systems on associative algebras are elements of the cyclic space. To define the corresponding Hamilton equations of motion
\begin{align*}
\forall x\in\mathcal A,\qquad\frac{\textrm d}{\textrm dt}x=\{H,x\}.
\end{align*}
we have to define $\{\_,\_\}$ --- a noncommutative analog of the Poisson bracket. The above equation must define a derivation of $\mathcal A$ and thus $\{\_,\_\}$ should satisfy the Leibnitz rule in the second argument. However, the skew-symmetricity is no longer a requirement, since the two arguments of $\{\_,\_\}$ are the elements of two different spaces. After some additional analysis in Sec. \ref{sec:HamiltonianFlow} we come to conclusion that the proper type of the bracket we need is actually a so-called Loday bracket \cite{Kosmann-Schwarzbach'1996,Loday'1998}.

Loday bracket satisfies Leibnitz rule only in the second argument. The absence of the Leibnitz rule in the first argument naively prevents us from defining the bracket by its action on generators of associative algebra. To overcome this subtlety we extend Van den Bergh approach to construction of a certain subclass of Loday algebras via the double Poisson bracket \cite{VandenBergh'2008}. The double Poission bracket satisfy some forms of the Leibnitz rule w.r.t. both arguments, which makes the definition more constructive.

We present the definition of double Poisson algebras by Van den Bergh in Sec. \ref{sec:DoublePoisson}, this original definition requires the bracket to be skew-symmetric in specific sense, which becomes crucial for derivation of Jacobi Identity. However, for our case we have to abandon the skew-symmetricity condition. It appears that the Jacobi Identity for the resulting Loday algebra can still be satisfied by the bracket presented in Sec. \ref{sec:KontsevichDoubleBracket}. As a result we present the bracket of Loday type which describes the equation of motion for the Kontsevich system.

Finally in Sec. \ref{sec:NonCommutativeLiouvilleIntegrability} we formulate a notion of Liouville integrability for systems on associative algebras and introduce the \textit{Associative} Inverse Scattering Method. With this approach the role of the Lax matrix is played by some element of the associative algebra. The corresponding commuting Hamilton flows are generated by noncommutative traces of powers of the Lax element.

The above notion of noncommutative integrability is an alternative to the one introduced by Kontsevich in paper \cite{Kontsevich'2011}. The relation between these two notions of noncommutative integrability could be an interesting topic for further research.



\section{Kontsevich system}

Let $A=\mathbb C\langle u^{\pm1},v^{\pm1}\rangle$ denotes the associative group algebra over $\mathbb C$ of the free group $G=\langle u,v\rangle$ with two generators. Kontsevich proposed a noncommutative system of ODE's on this algebra
\begin{align}
\left\{\begin{array}{l}
\dfrac{\textrm du}{\textrm dt}=uv-uv^{-1}-v^{-1},\\[10pt]
\dfrac{\textrm dv}{\textrm dt}=-vu+vu^{-1}+u^{-1},
\end{array}\right.
\label{eq:KontsevihSystem}
\end{align}
which admits the following discrete symmetry
\begin{align}
u\rightarrow uvu^{-1},\qquad v\rightarrow u^{-1}+v^{-1}u^{-1}.
\end{align}
The latter can be viewed as a noncommutative analog of B\"acklund transformations. Based on this data Kontsevich conjectured that (\ref{eq:KontsevihSystem}) is integrable.

In paper \cite{EfimovskayaWolf'2012} it was proved that system (\ref{eq:KontsevihSystem}) admits the Lax representation
\begin{align}
\frac{\textrm dL}{\textrm dt}=[L,M]
\end{align}
with the following Lax pair
\begin{align}
L=\left(\begin{array}{cc}
v^{-1}+u&\lambda v+v^{-1}u^{-1}+u^{-1}+1\\
v^{-1}+\frac1\lambda u&v+v^{-1}u^{-1}+u^{-1}+\frac1\lambda
\end{array}\right),
\qquad
M=\left(\begin{array}{cc}
v^{-1}-v+u&\lambda v\\
v^{-1}&u
\end{array}\right).
\label{eq:LaxPair}
\end{align}
This Lax pair gives a rise to an infinite number of Hamiltonians which cover all independent first integrals of (\ref{eq:KontsevihSystem}) as was conjectured in \cite{EfimovskayaWolf'2012}.

\subsection{Classical or commutative counterpart}

Note, that in the commutative (classical) case the equations (\ref{eq:KontsevihSystem}) are Hamilton
\begin{align*}
\frac{\textrm{d}u}{\textrm{d}t}=\{h,u\},\qquad\qquad \frac{\textrm{d}v}{\textrm{d}t}=\{h,v\}
\end{align*}
with respect to the Hamiltonian
\begin{align}
h=u+v+u^{-1}+v^{-1}+u^{-1}v^{-1}
\label{eq:ClassicalKontsevichHamiltonian}
\end{align}
and Poisson bracket defined by
\begin{align}
\{v,u\}=uv.
\label{eq:ClassicalBracket}
\end{align}
This implies that the commutative counterpart of (\ref{eq:KontsevihSystem}) is trivially integrable in the Liouville sense.

Bracket (\ref{eq:ClassicalBracket}) can be transformed to canonical one via change of variables $u=\textrm e^p,\; v=\textrm e^q,$ which gives $\{p,q\}=1$. Then the Hamiltonian acquires the following form
\begin{align}
h=\textrm e^p+\textrm e^{-p}+\textrm e^{q}+\textrm e^{-q}+\textrm e^{-p-q}
\label{eq:ClassicalPQKontsevichHamiltonian}
\end{align}

\subsection{Hamiltonians and first integrals}

Equations (\ref{eq:KontsevihSystem}) preserve the commutator of the underlying free group $\langle u,v\rangle$:
\begin{align}
uvu^{-1}v^{-1}=c.
\label{eq:Casimir}
\end{align}
In particular, when $c$ is central, the group algebra $\mathbb C\langle u^{\pm1},v^{\pm1}\rangle$ turns into a ``quantum group" and this restriction is consistent with equations (\ref{eq:KontsevihSystem}). Later in the text we show that this is a noncommutative analog of the Casimir element \cite{PichereauVandeWeyer'2008}.

On the classical level, we have another integral of motion, namely Hamiltonian (\ref{eq:ClassicalKontsevichHamiltonian}). Consider its noncommutative analog\footnote{Note, that the ordering in the last term doesn't matter, due to the trivial symmetry of equations of motion $u\leftrightarrow v$, $t\rightarrow -t$.}
\begin{align}
h=u+v+u^{-1}+v^{-1}+u^{-1}v^{-1}.
\label{eq:KontsevichFirstHamiltonian}
\end{align}
It is no longer a first integral of equations of motion. Indeed
\begin{align*}
\frac{\textrm d}{\textrm dt}h(t)=u^{-1}-vuv^{-1}+uv-vu+v^{-1}u^{-1}-u^{-1}v^{-1} +u^{-1}v^{-1}u^{-1} -u^{-2}v^{-1}\neq0.
\end{align*}
However, if we consider a matrix representation $\varphi:\;A\rightarrow Mat_{N\times N}(\mathbb C)$ for any $N$ we get
\begin{align}
\frac{\textrm d}{\textrm dt}\textrm{Tr}\,\varphi(h)=0.
\label{eq:htraceintegral}
\end{align}
Following terminology of \cite{MikhailovSokolov'2000, EfimovskayaWolf'2012} we call $h$ a "trace"-integral. Indeed, even more interesting property holds: for any representation $\varphi$ we have
\begin{align}
\frac{\textrm d}{\textrm dt}\textrm{Tr}\,\varphi(h^k)=0.
\label{eq:hktraceintegral}
\end{align}
In other terms for all representations $\varphi(h)(t)$ has adjoint dynamics under (\ref{eq:KontsevihSystem})
\begin{align}
\varphi(h)(t)=g(t)\varphi(h)(0)g^{-1}(t).
\label{eq:hadjointdynamics}
\end{align}

\section{Noncommutative Hamilton equations of motion for Kontsevich system}

In this section we present equations (\ref{eq:KontsevihSystem}) in the Hamilton form with Hamiltonian (\ref{eq:KontsevichFirstHamiltonian}).

In formulas (\ref{eq:htraceintegral}) and (\ref{eq:hktraceintegral}) we pointed out that the noncommutative analog of the classical Hamiltonian is not literally a first integral of equations of motion. Here we develop these ideas without using representations of $A$. This kind of objects was called the ``trace"-integrals in paper \cite{MikhailovSokolov'2000}. This means that the proper space for Hamiltonians is not an associative algebra itself, but certain quotient space of it, which makes the derivative of $h^k$ to be an element of the zero equivalence class. In particular, for this purpose we use the so-called cyclic space $A/[A,A]$ defined in Sec. \ref{sec:CyclicSpace}.

The idea of making the hamiltonians to be the elements of the different space ruins the anticommutativity of the possible noncommutative analog of the Poisson bracket. In Sec. \ref{sec:HamiltonianFlow} we analyze the necessary requirements for this bracket and come to the conclusion that we should utilize the so-called left Loday bracket. However to provide a constructive definition for the particular Loday bracket it is necessary to employ some object which satisfies the Leibnitz identity in both arguments.

In Sect. \ref{sec:DoublePoisson} we remind the original definition of double Poisson bracket by Van den Bergh \cite{VandenBergh'2008}. The so-called double Poisson bracket uses the tensor square of associative algebra as a codomain. Tensor square of associative algebra admits two independent structures of bimodule. This allows one to define the bracket by its action on generators which is not true for Loday brackets. Van den Bergh has shown that double Quasi-Poisson brackets can provide a Loday bracket after a composition with multiplication map.

However, for our purpose we are required to waive the very strict skew-symmetricity condition used by Van den Bergh, so in Sec. \ref{sec:KontsevichDoubleBracket} we define a modified Double Poisson bracket.

We conclude this section with Hamilton equations and Lax formalism for Kontsevich system in Sec. \ref{sec:KontsevichHamiltonians} and its specialization to classical and quantum Integrable Systems in Sec. \ref{sec:KontsevichSpecialization}.

\subsection{Cyclic space}
\label{sec:CyclicSpace}

Take some group $G$ (we don't impose any restrictions on this group throughout Sec. \ref{sec:CyclicSpace}, however we subsequently use the free group $G=\langle u,v\rangle$ with two generators) and consider its group algebra $\mathcal A=\mathbb C[G]$ over $\mathbb C$. It has a structure of an associative algebra over $\mathbb C$.

We can view it as a vector space of dimension $|G|$ (which in the case $G=\langle u,v\rangle$ is clearly infinite) with the base elements exactly the elements of a group $G$. Now we can form another linear space by taking the quotient of $\mathcal A$ as a vector space by the commutant
\begin{align}
[\mathcal A,\mathcal A]=\textrm{span}\{ab-ba\,|\, a,b\in G\}.
\label{eq:CyclicRelations}
\end{align}
Then we immediately get that all cyclic permutations of product of generators $x_i$ are equivalent modulo (\ref{eq:CyclicRelations}), for example $x_ix_jx_k\equiv x_jx_kx_i\equiv x_kx_ix_j\bmod [\mathcal A,\mathcal A]$
but in general $x_ix_jx_k\not\equiv x_ix_kx_j\bmod [\mathcal A,\mathcal A]$. Easy to see that equivalence of all monomials of generators under the cyclic permutations is a complete (which means sufficient) set of additional relations we get after taking the quotient $\mathcal A_{\natural}=\mathcal A/[\mathcal A,\mathcal A]$ and this fact motivates the name.

The resulting linear space $\mathcal A_{\natural}$ is not an associative subalgebra of $\mathcal A$, since the equivalence class of a zero $[\mathcal A,\mathcal A]$ is not an ideal in $A$. However it appears that this linear space $\mathcal A_{\natural}$ can be endowed with symplectic structure \cite{GelfandSmirnov'1994}.

Now, let $\pi:\mathcal A\rightarrow\mathcal A_\natural$ denotes a natural projection map, then (\ref{eq:hktraceintegral}) can be rewritten as
\begin{align*}
\frac{\textrm d}{\textrm dt}\pi(h^k)=0.
\end{align*}
In Sec. \ref{sec:HamiltonianFlow} we show, that it seems reasonable to understand the images $\pi(h^k)\in \mathcal A_\natural$ as Hamiltonians, and the cyclic space $\mathcal A_\natural=\mathcal A/[\mathcal A,\mathcal A]$ as a natural space for them. At the same time $h$ itself plays the same role as Lax matrix in classical integrable systems.

\subsection{Hamilton Flow for Eq. (\ref{eq:KontsevihSystem})}
\label{sec:HamiltonianFlow}

To present equations of motion (\ref{eq:KontsevihSystem}) in the Hamilton form
\begin{align*}
u=\{h,u\}_K,\qquad v=\{h,v\}_K
\end{align*}
with Hamiltonian (\ref{eq:KontsevichFirstHamiltonian}) we must provide a bracket s.t. for each Hamiltonian it defines a derivation of $A=\mathbb C\langle u,v\rangle$:
\begin{align*}
\forall a\in A\quad\frac{\textrm da}{\textrm dt}=\{h,a\}_K
\end{align*}
and thus should satisfy the Leibnitz rule in the second argument. On the other hand we already pointed out that in the case of noncommutative integrable systems the Hamiltonians are not literaly invariant under dynamics (\ref{eq:hadjointdynamics}). The invariant in this case is the image of $\pi(h)$ in the cyclic space, so we should require the bracket to be invariant under the cyclic permutations of monomials of the first argument. Or, equivalently we can say that the first argument of the bracket is actually the element of the cyclic space.

Now, the bracket becomes a function of elements in two different spaces and the exact anticommutativity cannot be imposed. Keeping this in mind, we immediately have a lot of inequivalent forms of Jacobi identity, and to restore the proper ordering we should go back to properties we want to be secured by it. From the point of view of Integrable Systems, the Jacobi identity is used to ensure that the commutator of the vector fields generated by two different Hamiltonians is the vector field corresponding to their commutator:
\begin{align}
\forall H_1,H_2,x\in A\qquad\{H_1,\{H_2,x\}\}-\{H_2,\{H_1,x\}\}=\{\{H_1,H_2\},x\}
\label{eq:JacobiTypeII}
\end{align}
The structure which satisfies this form of Jacobi identity and is not necessary anti-symmetric is called a Loday algebra \cite{Kosmann-Schwarzbach'1996,Loday'1998}. In what follows we refer to (\ref{eq:JacobiTypeII}) as Jacobi Identity of Loday type.

The Loday bracket satisfies the Leibnitz rule only in the second argument, which makes it impossible to define it's action on the whole associative algebra by its action on the generators unless we impose some additional requirement. It appears that the large class of Loday brackets can be constructed by means of the so-called double Quasi-Poisson brackets \cite{VandenBergh'2008}. However, we need a further modification of this structure for our purpose. Namely, we do not assume that the modified double Poisson bracket has to satisfy the Anti-Symmetricity condition (Sec. \ref{sec:DoublePoisson}, property (\ref{it:Doubleskew-symmetricity})). In Sec. \ref{sec:DoublePoisson} we present the Van den Bergh original definition for double Poisson bracket and breifly sketch its relation to Loday brackets.

\subsection{Double Poisson bracket by Van Den Bergh}
\label{sec:DoublePoisson}

In papers \cite{VandenBergh'2008} and \cite{CrawleyBoevey'2011} the general approach to the symplectic structure on noncommutative associative algebras was developed. It appears that the naive definition of the Poisson bracket on associative algebra $\mathcal A$ as a map $\mathcal A\times\mathcal  A\rightarrow\mathcal  A$ which satisfies the Jacobi and Leibnitz identities doesn't provide essential number of nontrivial examples, since in most cases it is restricted to be proportional to the commutator \cite{FarkasLetzter'1998,CrawleyBoevey'2011}. To resolve this issue Van den Bergh replaced the codomain and introduced the so-called double Poisson bracket, which was originally defined as a map
$\db{\_}:\;\mathcal  A\times\mathcal  A\rightarrow\mathcal  A\otimes\mathcal  A$ which satisfies the following properties:
\begin{enumerate}
\item{\textbf{Bilinearity:}} $\db{\_}$ is bilinear, as a consequence by the universal property it extends to the full tensor product
    \begin{align*}
    \db{\_}:\quad\mathcal  A\otimes\mathcal  A\rightarrow\mathcal  A\otimes\mathcal  A.
    \end{align*}
    In the rest of the Sec. \ref{sec:DoublePoisson} we imply this extension when use notation $\db{\_}$, since it is more natural. Altough, in almost all identites we apply double bracket to pure products of the form $a\otimes b$.
    \label{it:UniversalVarphi}
\item{\textbf{Leibnitz Identity:}\footnote{The idea of using the tensor product as a proper codomain for the double Poisson bracket allows us to endow $\mathcal A\otimes\mathcal  A$ with two independent structures of $\mathcal A$-bimodule. Namely, $x\circ_1(a\otimes b)\circ_1 y=xa\otimes by$ and $x\circ_2(a\otimes b)\circ_2y=ay\otimes xb$. This idea allowed Van den Bergh to combine Leibnitz and Jacobi identities without being to restrictive. One can note that Leibnitz identity utilizes different $\mathcal A$-bimodule structures for the first and the second component.}}
    \begin{align*}
    \db{a\otimes bc}=&\db{a\otimes b}(1\otimes c)+(b\otimes1)\db{a\otimes c}\\
    \db{ab\otimes c}=&\db{a\otimes c}(b\otimes1)+ (1\otimes a)\db{b\otimes c}
    \end{align*}
\item{\textbf{Anti-Symmetricity (Strong):}} $\db{a\otimes b}=-\db{b\otimes a}^{\circ}$ where $^{\circ}$ --- denotes the opposite of the tensor product, for pure products $(a\otimes b)^{\circ}=b\otimes a$.
    \label{it:Doubleskew-symmetricity}
\item{\textbf{Jacobi Identity (Type I):}} To write the Jacobi Identity we first introduce some useful notation. The double Poisson bracket from part \ref{it:UniversalVarphi} gives a rise to a map
    \begin{align*}
    R_{m,n}:\;\underbrace{\mathcal A\otimes\mathcal  A\otimes\dots\otimes\mathcal  A}_{k\;\textrm{times}}\rightarrow\underbrace{\mathcal A\otimes\mathcal  A\otimes\dots\otimes\mathcal  A}_{k\;\textrm{times}}
    \end{align*}
    s.t. if $\db{a\otimes b}=\sum_ix_i(a,b)\otimes y_i(a,b)$ then
    \begin{align*}
    R_{m,n}(a_1\otimes\dots\otimes a_k)=\sum_i a_1\otimes\dots\otimes x_i(a_m,a_n)\otimes\dots\otimes y_i(a_m,a_n)\otimes\dots\otimes a_k
    \end{align*}
    In particular, for $k=2$ we have $\db{\_}=R_{12}$. This is a noncommutative analog of $R$-matrix. With this notation Jacobi Identity reads
    \begin{align}
    R_{12}R_{23}+R_{31}R_{12}+R_{23}R_{31}=0
    \label{eq:StrongJacobi}
    \end{align}
    \label{it:StrongJacobi}
\end{enumerate}
There is a natural multiplication map on a tensor product of copies of associative algebra
\begin{align*}
\mu:\quad\mathcal  A\otimes\mathcal  A\rightarrow\mathcal  A,\quad \mu(a\otimes b)=ab.
\end{align*}
Define $\{a,b\}=\mu(\db{a\otimes b}),\,\forall A,B\in\mathcal A$. The following statements are due to Van den Bergh.
\begin{lemma}\cite{VandenBergh'2008} Bracket $\{\_\}$ defined above have the following properties:
\begin{enumerate}[(i)]
\item{Leibnitz identity in the second argument:}
    \begin{align*}
    \{a\otimes bc\}=\{a\otimes b\}c+b\{a\otimes c\}
    \end{align*}
    \label{it:OrdinaryLeibnitz}
\item{Invariance under cyclic permutations of monomials in the first argument:}
    \begin{align*}
    \{ab\otimes c\}=\{ba\otimes c\}
    \end{align*}
    \label{it:OrdinaryCyclic}
\item{skew-symmetricity modulo $[\mathcal A,\mathcal A]$:}
    \begin{align*}
    \{a\otimes b\}\simeq-\{b\otimes a\}\bmod [\mathcal A,\mathcal A]
    \end{align*}
    \label{it:Ordinaryskew-symmetric}
\item As a consequence of part (\ref{it:OrdinaryCyclic}) bracket gives rise to a map $(\mathcal A/[\mathcal A,\mathcal A])\otimes\mathcal  A\rightarrow\mathcal  A$.
\item Moreover, from parts (\ref{it:OrdinaryCyclic})--(\ref{it:Ordinaryskew-symmetric}) we conclude that it induces a map $\mathcal A/[\mathcal A,\mathcal A]\otimes\mathcal  A/[\mathcal A,\mathcal A]\rightarrow\mathcal  A/[\mathcal A,\mathcal A]$.
\end{enumerate}
\label{lemm:VanDenBerrgh}
\end{lemma}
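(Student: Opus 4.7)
The plan is to verify each of the five clauses by applying the multiplication map $\mu$ to the relevant axiom of the double bracket and tracking how the two bimodule structures on $\mathcal A\otimes \mathcal A$ degenerate to ordinary multiplication once $\mu$ is applied. Throughout I would fix Sweedler-style notation $\db{a\otimes b}=\sum_i x_i\otimes y_i$ (suppressing the dependence of $x_i,y_i$ on $a,b$) to keep the bookkeeping manageable.

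For clause (\ref{it:OrdinaryLeibnitz}), I would simply apply $\mu$ to the first Leibnitz axiom. The outer bimodule action gives $\db{a\otimes b}(1\otimes c)=\sum_i x_i\otimes y_i c$ and $(b\otimes 1)\db{a\otimes c}=\sum_j bx'_j\otimes y'_j$, which collapse under $\mu$ to $\{a,b\}c$ and $b\{a,c\}$. For clause (\ref{it:OrdinaryCyclic}) I would expand both $\db{ab\otimes c}$ and $\db{ba\otimes c}$ using the \emph{second} Leibnitz axiom, which uses the inner bimodule structure. The four summands that appear are $\sum_i x_i b\otimes y_i$, $\sum_j x'_j\otimes a y'_j$ (for $ab$) and $\sum_j x'_j a\otimes y'_j$, $\sum_i x_i\otimes by_i$ (for $ba$). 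After applying $\mu$, each pair reduces to the same element $\sum_i x_iby_i+\sum_j x'_jay'_j$, establishing $\{ab,c\}=\{ba,c\}$.

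For clause (\ref{it:Ordinaryskew-symmetric}) I would invoke strong anti-symmetry: if $\db{b\otimes a}=\sum_i x_i\otimes y_i$, then $\db{a\otimes b}=-\sum_i y_i\otimes x_i$, so $\{a,b\}+\{b,a\}=\sum_i(x_iy_i-y_ix_i)=\sum_i[x_i,y_i]\in [\mathcal A,\mathcal A]$. Clause (iv) is then a direct reading of clause (\ref{it:OrdinaryCyclic}): the kernel of $\mathcal A\otimes \mathcal A\to \mathcal A$, $b\otimes c\mapsto \{b,c\}$ in the first factor contains $ab-ba$ for all $a,b\in\mathcal A$, which by definition spans $[\mathcal A,\mathcal A]$. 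Finally for clause (v) I would combine (iii) and (iv): given $\bar a\in \mathcal A/[\mathcal A,\mathcal A]$, it suffices to check $\{\bar a,[x,y]\}\in [\mathcal A,\mathcal A]$; by (iii) this class equals $-\{[x,y],a\}$ modulo $[\mathcal A,\mathcal A]$, and by (iv) the latter vanishes outright.

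None of this is conceptually deep, so the only real obstacle is clerical: one must be careful to use the outer bimodule structure for the axiom governing the second slot of $\db{\_}$ and the inner bimodule structure for the axiom governing the first slot, and to apply the flip $(\cdot)^{\circ}$ correctly when unwinding strong anti-symmetry. Once $\mu$ is applied both bimodule structures are identified with ordinary multiplication in $\mathcal A$, and all five statements become short algebraic identities.
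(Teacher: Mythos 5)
Your proof is correct and follows essentially the same route as the paper: the paper establishes the analogous statements for its modified bracket (Proposition \ref{prop:KBracketBasicProperties}, parts (2a) and (2b)) by exactly your computation --- applying $\mu$ to the outer-bimodule Leibnitz identity for the second slot and to the inner-bimodule identity for the first slot --- and your one-line derivation of skew-symmetry modulo $[\mathcal A,\mathcal A]$ from strong anti-symmetry, $\{a,b\}+\{b,a\}=\sum_i[x_i,y_i]$, is the standard argument of the cited Van den Bergh paper. The only remark worth adding is that the paper needs the long combinatorial argument of Appendix \ref{sec:AppendixJacobi} for the corresponding property of $\{\_,\_\}_K$ precisely because the modified bracket lacks the strong anti-symmetry you are entitled to use here.
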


However, the Jacobi Identity (\ref{eq:StrongJacobi}) on $\mathcal A$ itself is neither required, nor guarantee the desired properties. Here we do not discuss modifications of Jacobi Identity presented by Van den Bergh to secure the Jacobi identity of Loday type (\ref{eq:JacobiTypeII}) for the resulting Loday algebra since they heavily rely on the strong version of Anti-Symmetricity condition presented in part (\ref{it:Doubleskew-symmetricity}) of the properties list (Lemma \ref{lemm:VanDenBerrgh}).

Odesskii, Rubtsov, and Sokolov \cite{OdesskiiRubtsovSokolov'2012,OdesskiiRubtsovSokolov'2013}  carried out the complete classification for linear and quadratic double Poisson brackets. In particular, from the necessary requirement ((2.21)--(2.22) in \cite{OdesskiiRubtsovSokolov'2013}) we conclude that there is no double Poisson bracket which has the classical counterpart (\ref{eq:ClassicalBracket}). This leads us to the point that for our purpose we need to weaken some restrictions in the definition of the Poisson bracket. It appears, that Bilinearity and Leibnitz Identity are essential for us, however the Anti-Symmetricity and Jacobi Identity can be substantially modified. In the next section we present an explicit construction for modified double Poisson bracket which helps to present equations (\ref{eq:KontsevihSystem}) in the Hamilton form.

\subsection{Modified Double Poisson bracket for Kontsevitch system}
\label{sec:KontsevichDoubleBracket}

In this section we construct a bracket on associative algebra $A=\mathbb C\langle u^{\pm1},v^{\pm1}\rangle$ which allows us to present equation (\ref{eq:KontsevihSystem}) in the Hamilton form. We define $\{\_\}_K:\;A\times A\rightarrow A$ as a composition of modified double quasi-Poisson bracket and multiplication map $\mu$
\begin{align}
\{a,b\}_K=\mu\left(\db{a\otimes b}_K\right)\
\label{eq:MultipliedKBracket}
\end{align}
Where the modified double quasi-Poisson bracket $\db{\_}_K$ is defined by its action on the generators
\begin{align}
\boxed{
\db{u,v}_K=-vu\otimes 1,\qquad
\db{v,u}_K=uv\otimes1,\qquad \db{u,u}_K=\db{v,v}_K=0.
}
\label{eq:KBracketOnGenerators}
\end{align}
along with the following requirements
\begin{enumerate}
\item{\textbf{Bilinearity:}} $\db{\_}_K$ is bilinear and thus extends to
    \begin{align*}
    \db{\_}_K:\quad A\otimes A\rightarrow A\otimes A\quad
    \end{align*}
    Again, we use further the same notation $\db{\_}_K$ for extension of this bracket to $A\otimes A$ as well as for operation defined on $A\times A$.
    \label{it:KBracketLinearity}
\item{\textbf{Leibnitz Identity:}}
    \begin{subequations}
    \begin{align}
    \db{a\otimes bc}_K=&\db{a\otimes b}_K(1\otimes c)+(b\otimes1)\db{a\otimes c}_K
    \label{eq:KBracketOuterModule}\\
    \db{ab\otimes c}_K=&\db{a\otimes c}_K(b\otimes1)+ (1\otimes a)\db{b\otimes c}_K
    \label{eq:KBracketInnerModule}
    \end{align}
    \label{eq:DoubleLeibnitz}
    \end{subequations}
    \label{it:KBracketLeibnitz}
\end{enumerate}
Properties (\ref{it:KBracketLinearity})--(\ref{it:KBracketLeibnitz}) along with formulae (\ref{eq:KBracketOnGenerators}) define $\db{\_}_K$ completely. Following \cite{VandenBergh'2008} we employ useful notation. Let $x,y\in A$ then we define the components of the bracket of their product via $\left(\db{x,y}_K'\right)_i$ and $\left(\db{x,y}_K''\right)_i$ as below
\begin{align*}
\db{x\otimes y}_K=:\sum_{i}\left(\db{x,y}_K'\right)_i\otimes\left(\db{x,y}_K'' \right)_i.
\end{align*}
In our case the sum is actually redundant. Rewriting (\ref{eq:KBracketOnGenerators}) we immediately get
\begin{align*}
\db{u\otimes v}_K=:&\db{u,v}_K'\otimes\db{u,v}_K''=-vu\otimes1,\\
\db{v\otimes u}_K=:&\db{v,u}_K'\otimes\db{v,u}_K''=uv\otimes1,\\
\db{u\otimes u}_K=&\db{v\otimes v}_K=0,
\end{align*}
and then extend it to $u^{\pm1},v^{\pm1}$ by Leibnitz identity (\ref{eq:DoubleLeibnitz})
\begin{align*}
\db{u^{-1}\otimes v^{-1}}_K=&(v^{-1}\otimes u^{-1})\db{u\otimes v}_K(u^{-1}\otimes v^{-1})=-1\otimes u^{-1}v^{-1},\\
\db{v^{-1}\otimes u^{-1}}_K=&(u^{-1}\otimes v^{-1})\db{v\otimes u}_K(v^{-1}\otimes u^{-1})=1\otimes v^{-1}u^{-1},\\[5pt]
\db{u^{-1}\otimes v}_K=&-(1\otimes u^{-1})\db{u\otimes v}_K(u^{-1}\otimes1)=v\otimes u^{-1},\\
\db{v\otimes u^{-1}}_K=&-(u^{-1}\otimes1)\db{v\otimes u}_K(1\otimes u^{-1})=-v\otimes u^{-1},\\[5pt]
\db{u\otimes v^{-1}}_K=&-(v^{-1}\otimes1)\db{u\otimes v}_K(1\otimes v^{-1})=u\otimes v^{-1},\\
\db{v^{-1}\otimes u}_K=&-(1\otimes v^{-1})\db{v\otimes u}_K(v^{-1}\otimes1)=-u\otimes v^{-1}.
\end{align*}
If $a,b\in A$ are monomials, then we can present them in the following form
\begin{align*}
a=a_1a_2\dots a_k,\quad a_i=u^{\pm1},v^{\pm1},\qquad b=b_1b_2\dots b_m,\quad b_j=u^{\pm1},v^{\pm1}.
\end{align*}
With this notation we have
\begin{align}
\db{a\otimes b}_K=\sum_{i,j}\left(b_1\dots b_{j-1}\db{a_i,b_j}_K'a_{i+1}\dots a_k\right)\otimes\left(a_1a_2\dots a_{i-1}\db{a_i,b_j}_K''b_{j+1}\dots b_m\right)
\label{eq:DoubleBracketExplicit}
\end{align}
and by linearity this formula extends to the full tensor product $A\otimes A$.

Now recall (\ref{eq:MultipliedKBracket}): $\{x,y\}_K=\mu(\db{x\otimes y}_K)$, this defines an operation $\{\_,\_\}_K:\;A\times A\rightarrow A$.
\begin{proposition}
$\{\_,\_\}_K$ satisfies the following properties:
\begin{enumerate}
\item[(1)]{\textbf{Bilinearity:}} $\{\_,\_\}_K$ is bilinear and thus extends to \begin{align*}
    \{\_\}_K:\qquad A\otimes A\rightarrow A;
    \end{align*}
\item[(2a)]{\textbf{Leibnitz Identity in the second argument:}}
    \begin{align*}
    \{a,bc\}_K=\{a,b\}_Kc+b\{a,c\}_K;
    \end{align*}
\item[(2b)]{\textbf{Invariance under cyclic permutations of monomials in the first argument:}}
    \begin{align*}
    \{ab,c\}_K=\{ba,c\}_K;
    \end{align*}
\item[(3)]{\textbf{Skew-symmetricity modulo $[A,A]$:}}
    \begin{align*}
    \{a,b\}_K\equiv\{b,a\}_K\bmod [A,A];
    \end{align*}
\item[(4)]{\textbf{Jacobi Identity}}
    \begin{align*}
    \forall H_1,H_2,x\in A:\qquad\{H_1,\{H_2,x\}_K\}_K-\{H_2,\{H_1,x\}_K\}_K=\{\{H_1,H_2\}_K,x\}_K.
    \end{align*}
\end{enumerate}
\label{prop:KBracketBasicProperties}
\end{proposition}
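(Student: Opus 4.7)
The plan is to split the four properties into (1)--(3), which follow by routine reductions of the double-bracket identities, and (4), the Jacobi identity, which is the substantive part. Property (1) is immediate from bilinearity of $\db{\_}_K$ and linearity of $\mu$. For (2a), I would apply $\mu$ to the Leibnitz identity (\ref{eq:KBracketOuterModule}): writing $\db{a\otimes b}_K=\sum_i x_i\otimes y_i$ and $\db{a\otimes c}_K=\sum_j p_j\otimes q_j$, the right-hand side becomes $\sum_i x_i y_i c+\sum_j b p_j q_j=\{a,b\}_K c+b\{a,c\}_K$. For (2b), I would apply $\mu$ to (\ref{eq:KBracketInnerModule}) for both $\db{ab\otimes c}_K$ and $\db{ba\otimes c}_K$ and observe that each yields the same ``sandwich'' sum $\sum_i x_i b y_i+\sum_j p_j a q_j$. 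For (3), I would check skew-symmetricity modulo $[A,A]$ directly on generators (e.g.\ $\{u,v\}_K+\{v,u\}_K=-vu+uv=[u,v]\in[A,A]$, and analogously for the remaining pairs among $u^{\pm 1},v^{\pm 1}$) and extend to arbitrary $a,b$ by induction on combined word length, using (2a) to expand products in the second slot and (2b) to cyclically rearrange the first, so that the correction terms collect into commutators.

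The substantive part is (4). Because the strong antisymmetry condition (\ref{it:Doubleskew-symmetricity}) has been dropped, the Van den Bergh triple-bracket mechanism does not apply, so the verification must be more hands-on. My strategy proceeds in three steps. First, for each fixed $H\in A$ the map $D_H\colon x\mapsto\{H,x\}_K$ is a derivation of $A$ by (2a); hence the left-hand side of Jacobi is the commutator of derivations $[D_{H_1},D_{H_2}]$ (itself a derivation), while the right-hand side is the derivation $D_{\{H_1,H_2\}_K}$. Since derivations of $A$ are determined by their values on the two generators, and $D_H(uu^{-1})=0$ automatically handles the inverse generators, it suffices to verify (4) for $x\in\{u,v\}$. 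Second, by bilinearity we may reduce to the case when $H_1,H_2$ are monomials in $u^{\pm 1},v^{\pm 1}$. Third, I would induct on $|H_1|+|H_2|$: the inductive step rewrites $\{H_1' H_1'',c\}_K=\sum_i \alpha_i H_1''\beta_i+\sum_j \gamma_j H_1'\delta_j$ via the sandwich formula obtained from $\mu$ applied to (\ref{eq:KBracketInnerModule}), reducing each bracket to arguments of smaller length, while the base case with $H_1,H_2$ single generators is a finite check against the explicit table (\ref{eq:KBracketOnGenerators}) and its extensions to $u^{-1},v^{-1}$.

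The main obstacle will be the inductive bookkeeping in step three. Unlike the Van den Bergh setting, where strong antisymmetry collapses cancellations into the clean triple-bracket identity (\ref{eq:StrongJacobi}), here the sandwich expansions produce asymmetric inner insertions of $H_1'$ and $H_1''$ that do not pair off automatically; tracking the cancellations requires careful use of the specific structure of (\ref{eq:KBracketOnGenerators}) -- notably $\db{u,u}_K=\db{v,v}_K=0$, and the fact that $\db{u,v}_K$ and $\db{v,u}_K$ are concentrated in the first tensor slot -- to eliminate the many potentially non-vanishing terms. I expect the cancellations to close because the deviation from strong antisymmetry is mild (the diagonal brackets vanish and the off-diagonal ones differ only by placement of a single tensor factor), but verifying this without invoking antisymmetry is where the real work lies.
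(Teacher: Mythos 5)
Your reductions for (1), (2a), (2b), and your first step for (4) --- using the derivation property in the last slot to reduce $x$ to the generators --- coincide with the paper's. The genuine gap is in your step three, and in the analogous induction you propose for (3): induction on $|H_1|+|H_2|$ at the level of the multiplied bracket $\{\_,\_\}_K$ does not close. When you expand $\{H_1'H_1'',c\}_K$ by applying $\mu$ to (\ref{eq:KBracketInnerModule}), you obtain sandwich sums $\sum_i x_iH_1''y_i+\sum_j p_jH_1'q_j$ built from the \emph{components} $x_i\otimes y_i$ of the double bracket, not from the values $\{H_1',c\}_K$ and $\{H_1'',c\}_K$ that the inductive hypothesis controls. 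Applying the outer bracket $\{H_2,\_\}_K$ to such a sandwich, the Leibnitz rule scatters the action across $x_i$, $H_1''$ and $y_i$, producing cross terms that pair a letter of $H_1'$ with a letter of $H_2$ while $H_1''$ sits inserted between them; these are not instances of the Jacobi expression for shorter words, so the induction has nothing to grip. The same obstruction defeats the induction you sketch for (3): after a cyclic rearrangement modulo $[A,A]$, the sandwich term gives $\sum_i y_ix_ic$, which differs from $\{b,a\}_Kc=\sum_i x_iy_ic$ by $\sum_i [y_i,x_i]\,c$ --- and since $[A,A]$ is only a linear span, not an ideal, a commutator multiplied by $c$ need not lie in $[A,A]$. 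Hence the claim that ``the correction terms collect into commutators'' fails as stated; this is exactly the place where Van den Bergh's triple-bracket mechanism, which you rightly note is unavailable without strong antisymmetry, would have been needed.

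The paper's replacement for that mechanism is not an induction but a global computation, and this is the idea your proposal is missing. For arbitrary monomials $H_1=a_1\cdots a_N$, $H_2=b_1\cdots b_M$ it writes out all terms of the three iterated brackets explicitly via (\ref{eq:DoubleBracketExplicit}) (displays (\ref{eq:ProofBH1H2u})--(\ref{eq:ProofH1H2u})), cancels three pairs of families outright, absorbs two more, and then proves the six surviving families cancel monomial-by-monomial through a combinatorial device: each way a fixed monomial $uc_1\cdots c_{N+M}$ arises is encoded by a quadruple $(c_i,c_{i+1},c_{i+M},c_{i+M+1})$, admissible successions of quadruples form the oriented graph (\ref{eq:generalquadruples}), each quadruple carries a fixed integer contribution, and a potential function $f$ (table (\ref{eq:ProofPotential})) certifies that every path from $\mathbf 0$ to $\mathbf 0$ has total weight zero. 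The same machinery, enlarged by edges encoding $uu^{-1}$ and $vv^{-1}$ flips, yields the skew-symmetry statement (\ref{eq:AppAntisymmetricity}) and the extension to all of $A$; note that even the paper carries out the full Jacobi argument only on $\mathbb C\langle u,v\rangle$ and sketches the case with inverses, whereas your device $D_H(uu^{-1})=0$ handles inverses only in the last slot $x$, not inside $H_1,H_2$. In short, you have correctly located where the real work lies, but supplied no mechanism to close the cancellations; the graph-and-potential bookkeeping (or an equivalent global argument) is the missing ingredient.
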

\begin{proof}
Part (1) is trivial. Part (2a) is provided by the outer bimodule structure of the double bracket. Indeed, apply $\mu$ to both sides of (\ref{eq:KBracketOuterModule}), this reads
\begin{align*}
\{a,bc\}:=&\mu(\db{a\otimes bc}_K)=\mu(\db{a\otimes b}_K(1\otimes c))+\mu((b\otimes1)\db{a\otimes b}_K)=\\
 =&\mu(\db{a\otimes b}_K)c+b\mu(\db{a\otimes c}_K)=\{a,b\}_Kc+b\{a,c\}_K.
\end{align*}
Part (2b) is provided by the inner bimodule structure of the double bracket. Here
\begin{align*}
\{ab,c\}_K:=&\mu(\db{ab\otimes c}_K)=\quad\textrm{by (\ref{eq:KBracketInnerModule})}\\
=&\mu(\db{a\otimes c}_K(b\otimes1))+\mu((1\otimes c)\db{a\otimes b}_K)=\\
=&\mu((1\otimes b)\db{a\otimes c}_K)+\mu(\db{a\otimes b}_K(c\otimes 1))=\\
=&\mu(\db{ba,c}_K)=:\{ba,c\}_K.
\end{align*}

The proof for parts (3) and (4) is quite technical and thus presented in Appendix \ref{sec:AppendixJacobi}. For the sake of brevity for part (4) (Jacobi identity) we present a proof for $\mathbb C\langle u,v\rangle\subset A$ and then sketch the proof for the whole algebra $A$.
\end{proof}

Using Proposition \ref{prop:KBracketBasicProperties} we conclude that $\{\_,\_\}_K$ is well-defined on $A/[A,A]\times A$ and provide a desired Loday bracket.

\subsection{Lax Matrix, Hamiltonians and Casimir elements.}
\label{sec:KontsevichHamiltonians}

Note first that
\begin{align}
\frac{\textrm d}{\textrm dt}h=\{h,h\}_K=[h,v+u^{-1}].
\label{eq:hLaxEquation}
\end{align}
This equation was first presented in \cite{EfimovskayaWolf'2012}. It is a noncommutative analog of the Lax equation, where the role of the first Lax matrix is played by the element $h$, whereas $M=v+u^{-1}$ plays the role of the second Lax matrix. It was claimed in paper \cite{EfimovskayaWolf'2012} that Eq. (\ref{eq:hLaxEquation}) cannot be considered as a Lax equation since this equation doesn't solely define derivatives for generators of associative algebra $A$. However, we point out that it is already enough to define a Loday bracket $\{\_,\_\}_K$ along with $h$ to completely define a derivation of $A$. From this point of view the \textbf{Lax equation (\ref{eq:hLaxEquation}) plays the role of the condition that secures $\pi(h^k)$ to be invariant}.

Actually, even stronger statement is true, namely $\pi(h^k)$ is an infinite chain of commuting Hamiltonians in $A/[A,A]$. This is shown by the following proposition.
\begin{proposition}
For all $N,M>0$ the corresponding hamiltonians $\pi(h^N)$ and $\pi(h^M)$ are in involution: $\{h^N,h^M\}_K\equiv 0\bmod [A,A]$.
\end{proposition}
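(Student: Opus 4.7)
The strategy is to reduce to the Lax identity (\ref{eq:hLaxEquation}) via iterated Leibnitz in the second argument. Applying Leibnitz (Property (2a) of Proposition \ref{prop:KBracketBasicProperties}) to $h^M = h\cdot h\cdots h$ and inserting $\{h,h\}_K = [h, v+u^{-1}]$ yields the telescoping identity
\begin{align*}
\{h, h^M\}_K = \sum_{j=0}^{M-1} h^j [h, v+u^{-1}] h^{M-1-j} = [h^M, v+u^{-1}] \in [A, A],
\end{align*}
settling the case $N=1$. Property (3) (skew-symmetry modulo $[A,A]$) then gives $\{h^N, h\}_K \in [A,A]$ for every $N$. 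Moreover, cyclic invariance of the first argument (Property (2b)) forces $\{[P,Q], x\}_K = 0$ for any commutator $[P,Q]$, so by the Loday-Jacobi identity (Property (4)) the derivations $\{h,\cdot\}_K$ and $\{h^N,\cdot\}_K$ commute on $A$: their operator commutator equals $\{\{h, h^N\}_K, \cdot\}_K = \{[h^N, v+u^{-1}], \cdot\}_K = 0$.

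The central step is to refine $\{h^N, h\}_K \in [A,A]$ to the sharper structural claim
\begin{align*}
\{h^N, h\}_K = [h, X_N] \quad \text{for some } X_N \in A.
\end{align*}
Once this holds, an induction on $M$ shows that $\{h^N, h^M\}_K = [h^M, X_N]$ for all $M$: the base $M=1$ is the claim itself, and the step $M\to M+1$ is a direct computation using Leibnitz in the second argument,
\begin{align*}
\{h^N, h^{M+1}\}_K = \{h^N, h\}_K\,h^M + h\,\{h^N, h^M\}_K = [h, X_N]h^M + h[h^M, X_N] = [h^{M+1}, X_N] \in [A,A],
\end{align*}
which completes the proof.

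The main obstacle is therefore establishing the existence of $X_N$. I would attempt this by induction on $N$, with $X_1 = v + u^{-1}$; the inductive step would combine the commutation identity $\{h, \{h^N, h\}_K\}_K = \{h^N, \{h, h\}_K\}_K$ derived above, the inductive hypothesis $\{h^N, h\}_K = [h, X_N]$, and the Lax form of $\{h, h\}_K$, then reorganize the resulting expressions using the associative Jacobi identity $[[h, Y], Z] - [[h, Z], Y] = [h, [Y, Z]]$ to collect all remaining terms into a single bracket $[h, \cdot\,]$. An alternative route, should this recombination resist closed form, is to construct $X_N$ directly from $\db{h\otimes h}_K$: iterating the inner Leibnitz rule (\ref{eq:KBracketInnerModule}) gives
\begin{align*}
\db{h^N\otimes h}_K = \sum_{i=0}^{N-1}(1\otimes h^i)\,\db{h\otimes h}_K\,(h^{N-1-i}\otimes 1),
\end{align*}
whence $\{h^N, h\}_K = N\sum_k A_k h^{N-1} B_k$ for any decomposition $\db{h,h}_K = \sum_k A_k\otimes B_k$, and this expression can then be reassembled as $[h, X_N]$ using the explicit bilinear expansion of $\db{h,h}_K$ obtained from the generator values (\ref{eq:KBracketOnGenerators}) and their Leibnitz extensions to $u^{\pm 1}, v^{\pm 1}$.
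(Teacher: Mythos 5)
Your reductions are all sound: the telescoping computation giving $\{h,h^M\}_K=[h^M,v+u^{-1}]$ is correct, the induction on $M$ via Leibnitz in the second argument is correct \emph{granted} your key claim, and the observation that cyclic invariance kills commutators in the first slot is right (and your appeal to property (3) works whichever sign that property carries). But the pivotal step --- the existence of $X_N\in A$ with $\{h^N,h\}_K=[h,X_N]$ \emph{exactly}, not merely modulo $[A,A]$ --- is announced rather than proven, and of your two proposed routes the first would fail: since $\{\_,\_\}_K$ satisfies no Leibnitz rule in the first argument, the commuting-derivations identity $\{h,\{h^N,h\}_K\}_K=\{h^N,\{h,h\}_K\}_K$ constrains $\{h^N,h\}_K$ for each fixed $N$ but supplies no mechanism for passing from $N$ to $N+1$, so it cannot power an induction. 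Your second route, however, genuinely closes the gap once you carry out the one computation you deferred. The explicit value (computed in the paper as (\ref{eq:PropInvolutionDoubleBracket})) is $\db{h,h}_K=1\otimes a-h\otimes b+e\otimes 1$ with $e=uv-vu$, so your formula from the inner Leibnitz rule (\ref{eq:KBracketInnerModule}) gives $\{h^N,h\}_K=N\bigl(h^{N-1}a-h^Nb+eh^{N-1}\bigr)=N\bigl(h^{N-1}(a-hb+e)+[e,h^{N-1}]\bigr)$. The $N=1$ Lax identity (\ref{eq:hLaxEquation}) gives $a-hb+e=[h,w]$ with $w=v+u^{-1}$; since $h$ commutes with its own powers, $h^{N-1}[h,w]=[h,h^{N-1}w]$ and $[h^{N-1},e]=\sum_{i=0}^{N-2}h^i[h,e]h^{N-2-i}=\bigl[h,\sum_{i=0}^{N-2}h^ieh^{N-2-i}\bigr]$, whence
\begin{align*}
\{h^N,h\}_K=\Bigl[h,\,N\Bigl(h^{N-1}w-\sum_{i=0}^{N-2}h^ieh^{N-2-i}\Bigr)\Bigr],
\end{align*}
an honest commutator with $h$ (with $X_1=w$, consistent with your base case). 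With this in hand your induction on $M$ completes the proof.

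For comparison, the paper never needs the exact-commutator refinement: it expands $\db{h^N\otimes h^M}_K$ by Leibnitz in \emph{both} tensor slots simultaneously, applies $\mu$, and uses cyclic equivalence in $A/[A,A]$ to collapse the double sum to $MN(a+e-hb)h^{M+N-2}\equiv MN[h,w]h^{M+N-2}\equiv0\bmod[A,A]$, identifying $a+e-hb=[h,w]$ from the $N=M=1$ case exactly as you do. So the two arguments consume the same inputs ($\db{h,h}_K$ and the Lax identity), but the paper works modulo $[A,A]$ from the start, while your completed version is structurally sharper, exhibiting $\{h^N,h^M\}_K$ as the single explicit commutator $[h^M,X_N]$ rather than only as an element of zero class. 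As submitted, though, the proposal has a genuine gap at its central claim, and only the second of your two fallback strategies is capable of filling it.
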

\begin{proof}
\begin{align}\{h^N,h^M\}_K=\mu(\db{h^N,h^M}_K)= \mu\left(\sum_{j=0}^N\sum_{k=0}^M(h^k\otimes h^j)\db{h,h}_K(h^{N-j-1}\otimes h^{M-k-1})\right).
\label{eq:PropInvolutionLeibnitz}
\end{align}
On the other hand
\begin{align}
\db{h,h}_K=&1\otimes a-h\otimes b+e\otimes 1,\quad\textrm{where}
\label{eq:PropInvolutionDoubleBracket}\\
a=&u^{-1}+v^{-1}-u^{-1}v^{-1}+v^{-1}u^{-1}+u^{-1}v^{-1}u^{-1}+v^{-1}u^{-1}v^{-1} +u^{-1}v^{-1}u^{-1}v^{-1},\nonumber\\
b=&u^{-1}v^{-1},\nonumber\\
e=&uv-vu.\nonumber
\end{align}
Combining (\ref{eq:PropInvolutionLeibnitz}) with (\ref{eq:PropInvolutionDoubleBracket}) we get
\begin{align}
\{h^N,h^M\}_K=&\mu\left(\sum_{j=0}^{N-1}\sum_{k=0}^{M-1}\left(h^{N+k-j-1}\otimes h^jah^{M-k-1}-h^{N+k-j}\otimes h^jbh^{M-k-1}+h^keh^{N-j-1}\otimes h^{M+j-k-1}\right)\right)\nonumber\\
=&\sum_{j=0}^{N-1}\sum_{k=0}^{M-1}\left(h^{N+k-1}ah^{M-k-1}-h^{N+k} bh^{M-k-1}+h^keh^{N+M-k-2}\right)
\label{eq:PropInvolutionMultMap}\\
\equiv&MN(a+e-hb)h^{M+N-2}\bmod [A,A].\nonumber
\end{align}
But, for $N=M=1$ we have (\ref{eq:hLaxEquation}), so using the \textbf{last but one} line of (\ref{eq:PropInvolutionMultMap}) we get
\begin{align*}
(a+e-hb)=[h,v+u^{-1}].
\end{align*}
And finally
\begin{align*}
\{h^N,h^M\}_K\equiv& MN [h,v+u^{-1}]h^{M+N+2}\bmod [A,A]\\
\equiv&0\bmod[A,A].
\end{align*}
\end{proof}
\begin{corollary}
For all $k>0,\;\pi(h^k)$ is integral of a system of equations (\ref{eq:KontsevihSystem}).
\label{prop:hkTraceIntegral}
\end{corollary}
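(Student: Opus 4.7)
The plan is to observe that the corollary reduces almost immediately to the proposition just proved, once one spells out how the time derivative interacts with the projection~$\pi$ and the Hamilton flow.

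First, I would recall that the Kontsevich flow is the Hamilton flow associated with $h$, so by definition $\frac{\textrm d}{\textrm dt}x = \{h,x\}_K$ for every $x \in A$. Because $\{h,\_\}_K$ obeys the Leibnitz rule in its second argument (part~(2a) of Proposition~\ref{prop:KBracketBasicProperties}), it is a derivation of~$A$. Iterating this derivation on a product, one obtains the standard formula
\begin{align*}
\frac{\textrm d}{\textrm dt} h^k \;=\; \sum_{j=0}^{k-1} h^{j}\{h,h\}_K h^{k-1-j},
\end{align*}
or more compactly $\frac{\textrm d}{\textrm dt}h^k = \{h, h^k\}_K$ by applying the Leibnitz rule $k-1$ times in the second slot.

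Next I would apply the projection $\pi\colon A \to A/[A,A]$ to both sides. Since $\pi$ is $\mathbb{C}$-linear and commutes with the $t$-independent operation of differentiation, one gets
\begin{align*}
\frac{\textrm d}{\textrm dt}\pi(h^k) \;=\; \pi\!\left(\{h,h^k\}_K\right).
\end{align*}
The just-proved proposition (case $N=1$, $M=k$) states exactly that $\{h,h^k\}_K \equiv 0 \bmod [A,A]$, i.e.\ it lies in the kernel of $\pi$. Therefore $\frac{\textrm d}{\textrm dt}\pi(h^k) = 0$, which is the desired conservation law.

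There is really no obstacle here; the whole content was in the involutivity statement $\{h^N,h^M\}_K\equiv 0 \bmod [A,A]$ already established. The only thing worth double-checking is that the Hamiltonian description on the cyclic space is compatible with the evolution on $A$ itself, but this is guaranteed by Lemma~\ref{lemm:VanDenBerrgh}(iv) and the descent of $\{\_,\_\}_K$ to a map $A/[A,A] \otimes A \to A$ together with the derivation property in the second argument.
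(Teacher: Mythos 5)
Your proof is correct, and it reaches the same conclusion as the paper by a slightly different quotation of prior results. The paper's own proof is the one-liner $\pi(\tfrac{\textrm d}{\textrm dt}h^k)=\pi([h^k,v+u^{-1}])=0$: it uses the Lax equation (\ref{eq:hLaxEquation}), $\tfrac{\textrm d}{\textrm dt}h=[h,v+u^{-1}]$, together with the Leibnitz rule to write the derivative of $h^k$ as an explicit commutator $[h^k,v+u^{-1}]$, which manifestly lies in $[A,A]$ and hence dies under $\pi$. You instead specialize the involutivity proposition at $N=1$, $M=k$ to get $\{h,h^k\}_K\equiv0\bmod [A,A]$ after identifying $\tfrac{\textrm d}{\textrm dt}h^k=\{h,h^k\}_K$ via the derivation property. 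Both routes rest on the same underlying computation (the proposition itself was proved from (\ref{eq:hLaxEquation})), so your argument is the cleaner ``corollary of the proposition'' logic, while the paper's version buys a little more: the explicit commutator form exhibits the adjoint dynamics (\ref{eq:hadjointdynamics}) of $\varphi(h^k)$ in every representation, not merely the vanishing after projection. One small citation slip to fix: the descent of $\{\_,\_\}_K$ to $A/[A,A]\times A$ and its derivation property in the second slot follow from Proposition \ref{prop:KBracketBasicProperties}, parts (2a)--(2b), not from Lemma \ref{lemm:VanDenBerrgh}, which concerns Van den Bergh's original bracket built from a genuinely skew-symmetric double bracket.
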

\begin{proof}
$\pi(\frac{\textrm d}{\textrm dt}h^k)=\pi([h^k,v+u^{-1}])=0$.
\end{proof}

Here we should point out the fact that $\pi(h^k)$ as elements of the cyclic space are independent, whereas all $h^k$ are generated by a single element. When we come to the quotient space $A/[A,A]$ it is no longer have a natural multiplication. Or in other words given an equivalence class $\pi(h)\in A/[A,A]$ we cannot pick a proper representative $h\in A$ which generates the whole series. This makes $\pi(h^2)$ to be in principle unidentified by $\pi(h)$. Namely, given the equivalence class $\pi(h)$ we don't have Lax equation (\ref{eq:hLaxEquation}) for each representative of each class in $A$.
\begin{lemma}
The infinite series of commuting hamiltonians $\pi(h^k)$ is linearly independent over $\mathbb C$.
\end{lemma}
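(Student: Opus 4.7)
The plan is to separate the elements $\pi(h^k)$ by pulling them back through a two-parameter family of one-dimensional representations. For each pair $(\alpha,\beta)\in(\mathbb C^*)^2$ the assignment $u\mapsto\alpha$, $v\mapsto\beta$ extends to an algebra homomorphism $\varphi_{\alpha,\beta}:A\to\mathbb C$. Because the target is commutative, $\varphi_{\alpha,\beta}$ kills every commutator $ab-ba$ and hence descends to a linear functional $\bar\varphi_{\alpha,\beta}$ on the cyclic space $A/[A,A]$. Letting $(\alpha,\beta)$ vary, these functionals assemble into a single linear map $\Phi:A/[A,A]\to\mathbb C[\alpha^{\pm1},\beta^{\pm1}]$ under which $\Phi(\pi(h^k))=P(\alpha,\beta)^k$, where
$$P(\alpha,\beta)=\alpha+\beta+\alpha^{-1}+\beta^{-1}+\alpha^{-1}\beta^{-1}.$$

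Suppose now that some finite linear combination $\sum_{k=1}^n c_k\pi(h^k)=0$ holds in $A/[A,A]$. Applying $\Phi$ yields the identity $\sum_{k=1}^n c_k P(\alpha,\beta)^k=0$ inside $\mathbb C[\alpha^{\pm1},\beta^{\pm1}]$. Since $P$ is plainly non-constant and $\mathbb C[\alpha^{\pm1},\beta^{\pm1}]$ is an integral domain whose subring of constants is $\mathbb C$, the element $P$ is transcendental over $\mathbb C$; consequently the only polynomial in $P$ that vanishes is the zero polynomial, so $c_1=\dots=c_n=0$. Equivalently, one can finish by a direct Vandermonde argument: the restriction $P(\alpha,1)=\alpha+2+2\alpha^{-1}$ already surjects onto $\mathbb C$, so one specializes at $n$ values $(\alpha_i,1)$ chosen so that the $z_i:=P(\alpha_i,1)$ are pairwise distinct and nonzero, and then the system $\sum_k c_k z_i^k=0$ has invertible Vandermonde matrix $\mathrm{diag}(z_i)\cdot(z_i^{k-1})_{i,k}$, forcing all $c_k$ to vanish.

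The only step that requires any check is that $\Phi$ is well defined on the cyclic quotient, which is immediate because $\varphi_{\alpha,\beta}$ factors through the abelianization of $A$. No input from the Kontsevich dynamics, the Lax equation \eqref{eq:hLaxEquation}, or the modified double bracket is used: the statement is a purely structural fact about the commutative image of the element $h$ in the free group algebra, and I expect the main (very mild) subtlety to be exactly the remark that commuting evaluations at $\mathbb C$ are enough to detect linear independence modulo $[A,A]$.
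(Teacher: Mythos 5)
Your proof is correct, but it takes a genuinely different route from the paper's. The paper disposes of the lemma in one line: it suffices to look at the highest-order term in $u$ of $\pi(h^k)$. Since $u$ is the only generator appearing in $h$ with positive $u$-degree, the monomial $u^k$ occurs in $h^k$ with coefficient $1$ and occurs in no $h^j$ with $j<k$; as the cyclic classes $\pi(u^k)$ are distinct basis vectors of $A/[A,A]$, a triangularity argument finishes immediately, with no well-definedness check needed. You instead pass to the one-dimensional representations $u\mapsto\alpha$, $v\mapsto\beta$, i.e.\ you factor through the abelianization of $A$ — note this is a strictly coarser quotient than the cyclic space, since the abelianization kills the two-sided \emph{ideal} generated by commutators while $A/[A,A]$ only kills their linear \emph{span}; you use the implication in the correct direction (vanishing upstairs forces vanishing downstairs), so this is harmless. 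Your transcendence argument for $P=\alpha+\beta+\alpha^{-1}+\beta^{-1}+\alpha^{-1}\beta^{-1}$ is sound (a non-constant element of the Laurent polynomial domain cannot be algebraic over the algebraically closed field $\mathbb C$), and the Vandermonde specialization at $\beta=1$, $z_i=P(\alpha_i,1)=\alpha_i+2+2\alpha_i^{-1}$ distinct and nonzero, is a valid elementary alternative. What your route buys is a strictly stronger conclusion — the $\pi(h^k)$ stay independent even in the commutative quotient, so independence is already visible in the classical limit of the system, and the method works for any Lax element whose classical counterpart is non-constant; what the paper's route buys is brevity and the fact that it works entirely inside the cyclic space, which matters in situations where the commutative images could degenerate (e.g.\ if the abelianized Hamiltonian were constant) even though the leading-term argument still applies.
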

\begin{proof}
It is enough to consider highest term in $u$ in $\pi(h^k)$.
\end{proof}

This brings us to conclusion that $h$ is a noncommutative analog of the Lax matrix. In each representation it has adjoint dynamics (\ref{eq:hadjointdynamics}), as well as it generates the infinite series of commuting hamiltonians $\pi(h^k)$. Here $\pi$ is a projection to the cyclic space which can be treated as noncommutative analog of $\textrm{Tr}$.

Now, we left with the task to understand the meaning of Casimir functions. We already pointed out that (\ref{eq:Casimir}) is invariant under dynamics (\ref{eq:KontsevihSystem}). However it appears that even stronger statement is true.
\begin{proposition}
For each $H\in A/[A,A]$ the corresponding Hamilton flow $\{H,\_\}_K$ with respect to bracket (\ref{eq:MultipliedKBracket}) preserves the group commutator $c=uvu^{-1}v^{-1}$
\label{prop:RightCasimir}
\end{proposition}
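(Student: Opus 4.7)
The plan is to show $\{H,c\}_K = 0$ for every $H\in A$. By bilinearity together with part (2b) of Proposition \ref{prop:KBracketBasicProperties}, it suffices to treat a monomial $H = g_1\cdots g_n$ with $g_i\in\{u^{\pm 1},v^{\pm 1}\}$.

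\emph{Step 1: generator computation.} Expand $c = uvu^{-1}v^{-1}$ via the outer Leibnitz identity (\ref{eq:KBracketOuterModule}) and use the tabulated values of $\db{\_}_K$ on generators to compute the four expressions $\db{g\otimes c}_K$ for $g\in\{u^{\pm 1},v^{\pm 1}\}$. Each is a two-term element of $A\otimes A$, e.g.\ $\db{u\otimes c}_K = -uvu\otimes u^{-1}v^{-1} + uv\otimes v^{-1}$. Now introduce the auxiliary map $D^g\colon A\to A$ obtained by "inserting" $x$ between the two tensor factors: $D^g(x) := \sum_k X_k^{(g)} x\, Y_k^{(g)}$ where $\db{g\otimes c}_K = \sum_k X_k^{(g)}\otimes Y_k^{(g)}$. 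The key algebraic identity I would verify by direct inspection in each of the four cases is the uniform formula
\begin{equation*}
D^g(x) \;=\; -uv\,[g,x]\,u^{-1}v^{-1}, \qquad x\in A,\; g\in\{u^{\pm 1},v^{\pm 1}\}.
\end{equation*}
In every case the two summands of $\db{g\otimes c}_K$ differ by exactly a factor of $g$ on one side of the insertion slot, so after factoring $uv$ on the left and $u^{-1}v^{-1}$ on the right one is left with $xg - gx = -[g,x]$.

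\emph{Step 2: reduction to a telescoping sum.} Iterating the inner Leibnitz rule (\ref{eq:KBracketInnerModule}) along $H = g_1\cdots g_n$ gives
\begin{equation*}
\db{H\otimes c}_K \;=\; \sum_{i=1}^n (1\otimes g_1\cdots g_{i-1})\,\db{g_i\otimes c}_K\,(g_{i+1}\cdots g_n\otimes 1).
\end{equation*}
Applying $\mu$ turns the $i$-th summand into $D^{g_i}(\tilde H_i)$, where $\tilde H_i := g_{i+1}\cdots g_n\,g_1\cdots g_{i-1}$ is the cyclic rotation of $H$ with $g_i$ deleted. Substituting the identity from Step 1,
\begin{equation*}
\{H,c\}_K \;=\; -uv\left(\sum_{i=1}^n\bigl(g_i\tilde H_i - \tilde H_i g_i\bigr)\right)u^{-1}v^{-1}.
\end{equation*}
Because $g_i\tilde H_i$ and $\tilde H_i g_i$ are exactly two consecutive cyclic rotations of the word $H$, the inner sum telescopes, collapsing to $H - H = 0$.

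The only genuine calculation is Step 1; once the clean commutator form of $D^g$ is established, the rest is formal bookkeeping using the two bimodule structures carried by the double bracket. I expect that verifying the identity $D^g(x) = -uv[g,x]u^{-1}v^{-1}$ uniformly across all four generators, including checking the cases $g=u^{-1},v^{-1}$ (where one must first derive $\db{u^{-1}\otimes c}_K$ and $\db{v^{-1}\otimes c}_K$ from the $u,v$ cases using Leibnitz), will be the main place where care is needed; everything after that step is automatic.
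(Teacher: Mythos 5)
Your proposal is correct and follows essentially the same route as the paper: your Step 1 identity $D^g(x)=-uv[g,x]u^{-1}v^{-1}$ is precisely the paper's generator computation $\db{g\otimes c}_K=(1\otimes g)r-r(g\otimes 1)$ with $r=uv\otimes u^{-1}v^{-1}$, rewritten through the insertion map, and both arguments then extend to arbitrary monomials via the inner Leibnitz identity (\ref{eq:KBracketInnerModule}). The only difference is bookkeeping: the paper keeps the closed form in $A\otimes A$ by a one-line induction, so that applying $\mu$ annihilates $\db{a\otimes c}_K$ for every $a\in A$ at once, whereas you apply $\mu$ summand-by-summand and then cancel by cyclic telescoping --- an equivalent computation.
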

\begin{proof}
Direct computation shows that
\begin{align*}
\db{u\otimes c}_K=&uv\otimes v^{-1}-uvu\otimes u^{-1}v^{-1}&&=(1\otimes u)r-r(u\otimes1),\qquad\textrm{where}\quad r=uv\otimes u^{-1}v^{-1}\\
\db{v\otimes c}_K=&&&=(1\otimes v)r-r(v\otimes1),
\end{align*}
the same holds for $u^{-1},v^{-1}$. Now we use the induction by Leibnitz identity (\ref{eq:KBracketInnerModule}) to prove that
\begin{align}
\forall a\in A\qquad \db{a\otimes c}_K=(1\otimes a)r-r(a\otimes1)
\label{eq:CasimirDoubleBracket}
\end{align}
Assume that this holds for $a,b$ and prove this for $ab$:
\begin{align*}
\db{ab\otimes c}_K=&\db{a\otimes c}_K(b\otimes 1)+(1\otimes a)\db{b\otimes c}_K=\\
=&\left((1\otimes a)r-r(a\otimes1)\right)(b\otimes1)+(1\otimes a)\left((1\otimes b)r-r(b\otimes1)\right)=\\
=&(1\otimes ab)r-r(ab\otimes1).
\end{align*}
This implies that (\ref{eq:CasimirDoubleBracket}) is valid. Note finally that by applying multiplication map $\mu$ to both sides of (\ref{eq:CasimirDoubleBracket}) we always get zero. This finalizes the proof.
\end{proof}

In other words, it is a right Casimir function for bracket (\ref{eq:MultipliedKBracket}) $\{\_,\_\}_K$. But it is not a left Casimir function, which means that $\pi(c)$ doesn't generate the trivial flow, like it was in the commutative case. Say
\begin{align}
\{c,u\}_K=uvu^{-1}v^{-1}u-u^2vu^{-1}v^{-1}\neq0.
\label{eq:LeftCasimirCounterExample}
\end{align}
However, it satisfies an important property.
\begin{proposition}
For all $H\in A/[A,A]\quad \{H,c\}_K\equiv0\bmod [A,A]$
\end{proposition}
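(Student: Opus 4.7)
The assertion is an immediate corollary of the preceding Proposition~\ref{prop:RightCasimir}, which established the strictly stronger fact that the Hamilton flow $\{H,\_\}_K$ literally preserves $c$, i.e.\ $\{H,c\}_K=0$ as an element of $A$, not merely in the cyclic quotient. My plan is simply to quote that result.

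\textbf{Why it suffices.} Recall that Proposition~\ref{prop:RightCasimir} was proved by first showing, by induction on word length using the inner Leibnitz rule \eqref{eq:KBracketInnerModule}, the identity
\begin{align*}
\db{a\otimes c}_K=(1\otimes a)r-r(a\otimes 1),\qquad r=uv\otimes u^{-1}v^{-1},
\end{align*}
for every $a\in A$, and then observing that applying the multiplication map $\mu$ to either summand produces the same element $uv\cdot a\cdot u^{-1}v^{-1}\in A$. Hence $\{a,c\}_K=\mu(\db{a\otimes c}_K)=0$ in $A$ for every $a\in A$. To pass to the statement as worded, I would first note that by property~(2b) of Proposition~\ref{prop:KBracketBasicProperties} the assignment $a\mapsto\{a,c\}_K$ is invariant under cyclic permutation of monomials in its first argument, hence descends to a well-defined map $A/[A,A]\to A$. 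Choosing any lift $h\in A$ of $H$, the value $\{H,c\}_K=\{h,c\}_K$ then vanishes in $A$ by the calculation above, and a fortiori $\{H,c\}_K\equiv 0\bmod[A,A]$.

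\textbf{Obstacle.} There is no real technical obstacle: the present proposition is strictly weaker than Proposition~\ref{prop:RightCasimir}, and its proof reduces to a single invocation of that result. The reason to record it separately appears to be stylistic — the formulation modulo $[A,A]$ is the natural one when thinking of $c$ as a Casimir of a Loday bracket on the cyclic space $A/[A,A]$, in analogy with the classical case. Accordingly, I would write the proof as a one-line deduction from Proposition~\ref{prop:RightCasimir} rather than reprise the tensor computation.
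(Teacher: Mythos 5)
Your reduction is sound for the statement exactly as printed: Proposition \ref{prop:RightCasimir} does give $\{a,c\}_K=0$ in $A$ for every $a\in A$ (both summands of $(1\otimes a)r-r(a\otimes1)$ multiply under $\mu$ to $uvau^{-1}v^{-1}$), and together with cyclic invariance (2b) this makes $H\mapsto\{H,c\}_K$ well defined on $A/[A,A]$ and identically zero, hence trivially zero modulo $[A,A]$. But this is not the paper's proof, and the discrepancy is informative: the paper proves the proposition by combining Proposition \ref{prop:RightCasimir} \emph{with the skew-symmetry property (3)} of Proposition \ref{prop:KBracketBasicProperties}, a step that would be entirely redundant under your reading. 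The surrounding text shows why the extra ingredient is there. The element $c$ has just been shown to be a right Casimir but \emph{not} a left Casimir ($\{c,u\}_K\neq0$, Eq.\ (\ref{eq:LeftCasimirCounterExample})), and the proposition is introduced as the ``important property'' that $c$ nevertheless satisfies, with the stated consequence that $\pi(c)$ lies in the center of the Lie algebra $\{\_,\_\}_{II}$ on the cyclic space. That only makes sense if the intended claim is the transposed one, $\{c,H\}_K\equiv0\bmod[A,A]$ --- the flow generated by the Casimir is trivial at the level of the cyclic space --- so the printed order of arguments is almost surely a typo, of a piece with the missing minus sign in the printed property (3), which the appendix actually proves in the form $\{H_1,H_2\}_K+\{H_2,H_1\}_K\equiv0\bmod[A,A]$.

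Under that intended reading your argument has a genuine gap: since $\{c,u\}_K=cu-uc\neq0$ in $A$, no a fortiori deduction from Proposition \ref{prop:RightCasimir} alone can reach the transposed statement; one must pass through skew-symmetry modulo $[A,A]$, namely $\{c,a\}_K\equiv-\{a,c\}_K=0\bmod[A,A]$ (note that in the counterexample $\{c,u\}_K$ is literally the commutator $[c,u]$, consistent with this). Your own observation that the printed statement is ``strictly weaker'' than Proposition \ref{prop:RightCasimir} and adds nothing should have been a red flag: a proposition whose content is an immediate restatement of the preceding one, yet whose proof in the paper cites an additional nontrivial property, is a strong signal that the statement has been garbled in transcription. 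The correct proof is still one line, but a different line from yours: exchange the arguments using property (3), then kill the resulting bracket with Proposition \ref{prop:RightCasimir}.
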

\begin{proof}
Combine Proposition \ref{prop:RightCasimir} and property (3) from Proposition \ref{prop:KBracketBasicProperties}.
\end{proof}
Proposition \ref{prop:KBracketBasicProperties} means that Casimir operator belongs to the center of the Lie Algebra on a cyclic space (the natural space for Hamiltonians).

\textbf{Discussion on generating set for all "trace" integrals.}

Summarizing we can conclude
\begin{corollary}
If $x\in\pi(\mathbb C\langle h,c,c^{-1}\rangle)$ for some
$x\in A/[A,A]$, then $\frac{\textrm d}{\textrm dt}x=\{h,x\}_K\equiv 0\bmod [A,A]$.
\label{Cor:TraceIntegrals}
\end{corollary}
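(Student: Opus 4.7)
The plan is to reduce by $\mathbb{C}$-linearity of the induced flow on $A/[A,A]$ to the images of monomials in $h,c,c^{-1}$, and then invoke the two preceding propositions. By the Leibniz rule in the second argument (Proposition \ref{prop:KBracketBasicProperties}(2a)), for any monomial $P=z_1\cdots z_n$ with $z_i\in\{h,c,c^{-1}\}$,
\[
\{h,P\}_K=\sum_{i=1}^n z_1\cdots z_{i-1}\,\{h,z_i\}_K\,z_{i+1}\cdots z_n.
\]

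Positions with $z_i=c^{\pm1}$ contribute nothing: Proposition \ref{prop:RightCasimir} gives $\{h,c\}_K=0$ in $A$, and Leibniz applied to the identity $\{h,cc^{-1}\}_K=0$ forces $\{h,c^{-1}\}_K=0$ as well. Positions with $z_i=h$ contribute $z_1\cdots z_{i-1}\,[h,M]\,z_{i+1}\cdots z_n$ via the Lax equation \eqref{eq:hLaxEquation}, where $M=v+u^{-1}$.

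I would next apply $\pi$ and group the surviving summands by maximal blocks of consecutive $h$'s in $P$. Within a block of $k$ adjacent $h$'s the block sum telescopes under cyclic invariance to $\pi\bigl([h^k,M]\cdot W\bigr)$, where $W$ denotes the cyclic complement built from the surrounding $c^{\pm1}$ factors. The commuting Hamiltonians proposition preceding the corollary treats exactly the case $W=h^\ell$, where one further cyclic shift gives $\pi([h^k,M]h^\ell)=0$. For general $W$ containing $c^{\pm1}$, I would use the two-sided Casimir property of $\pi(c)$ in the Lie algebra $A/[A,A]$ (Proposition \ref{prop:RightCasimir} combined with the skew property of Proposition \ref{prop:KBracketBasicProperties}(3)) to slide the $c^{\pm1}$ factors past $[h^k,M]$ inside $\pi$, thereby reducing each summand to the case already handled.

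I expect the main obstacle to be precisely this sliding step: the identity $\pi\bigl([h^k,M]\cdot W\bigr)=0$ in $A/[A,A]$ for cyclic complements $W$ involving $c^{\pm1}$ does not follow from cyclic invariance alone. It requires combining the commuting Hamiltonians identity with the Casimir property in a way that witnesses that the adjoint action of $\pi(h)$ in the Lie algebra $(A/[A,A],\{\_,\_\}_K)$ annihilates the full image $\pi(\mathbb{C}\langle h,c,c^{-1}\rangle)$, and it captures the essential content of the Corollary beyond the immediate linearity reduction.
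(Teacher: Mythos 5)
Your reduction is correct as far as it goes: Leibniz in the second argument splits $\{h,P\}_K$ over the letters of a monomial $P$, Proposition \ref{prop:RightCasimir} gives $\{h,c\}_K=0$ exactly (and hence $\{h,c^{-1}\}_K=0$), each maximal block of $k$ adjacent $h$'s telescopes exactly to an insertion of $[h^k,M]$ with $M=v+u^{-1}$ by (\ref{eq:hLaxEquation}), and $\pi([h^k,M]h^\ell)=0$ by one cyclic rotation, which recovers Corollary \ref{prop:hkTraceIntegral}. But the step you flagged as the ``main obstacle'' is a genuine gap, and moreover it cannot be closed: the identity $\pi([h^k,M]W)=0$ you would need for cyclic complements $W$ containing $c^{\pm1}$ is \emph{false}. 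The two-sided Casimir property of $\pi(c)$ is a statement about the Lie bracket $\{\_,\_\}_{II}$ on the cyclic space, and $A/[A,A]$ carries no multiplication, so there is no operation that ``slides'' a $c^{\pm1}$ factor past $[h^k,M]$ inside a cyclic word; $\pi(h^kc^m)$ is simply not a function of $\pi(h^k)$ and $\pi(c^m)$ in this Lie algebra.

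Concretely, take the simplest mixed word $P=hc$. Then $\{h,hc\}_K=\{h,h\}_Kc+h\{h,c\}_K=[h,M]c$, where a direct computation gives $[h,M]=uv-vu+u^{-1}+v^{-1}u^{-1}+u^{-1}v^{-1}u^{-1}-vu^{-1}v^{-1}-u^{-1}v^{-1}-u^{-2}v^{-1}$ (note $\pi([h,M])=0$, the classes pair up). Since $A$ is a group algebra of a free group, $A/[A,A]$ has the conjugacy classes, i.e.\ cyclic words, as a basis; multiplying $[h,M]$ on the right by $c=uvu^{-1}v^{-1}$ and cyclically reducing each of the eight group words yields
\begin{align*}
\pi([h,M]c)=\;&\pi(uvuvu^{-1}v^{-1})-\pi(uv)+\pi(u^{-1})-\pi(u^{-2}v^{-1}uv)\\
&+\pi(u^{-1}v^{-1})-\pi(u^{-1}v^{-1}uvu^{-1}v^{-1})+\pi(u^{-2}v^{-1})-\pi(u^{-2}v^{-1}uvu^{-1}v^{-1}),
\end{align*}
eight pairwise non-conjugate cyclic words (within each pair of equal abelianization the cyclically reduced lengths differ), so $\pi([h,M]c)\neq0$ and $\frac{\textrm d}{\textrm dt}\pi(hc)\neq0$. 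So your plan cannot be completed for mixed monomials; what your argument (and the paper's material) actually establishes is the statement on the span of $\pi(h^k)$ and $\pi(c^m)$, or the full statement in quotients where $c$ is central (the specialization of Sec.\ \ref{sec:KontsevichSpecialization}, where $[h^k,M]c^m=[h^kc^m,M]$ and cyclicity finishes). For comparison, the paper offers no proof of Corollary \ref{Cor:TraceIntegrals} at all --- it is introduced with ``Summarizing we can conclude'' --- so by making the sliding step explicit you have not failed to reconstruct a proof; you have located exactly where the claimed summary is unsupported.
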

It is an interesting question whether converse to Corollary \ref{Cor:TraceIntegrals} is true. Or, equivalently, whether elements $h,\,c$ and $c^{-1}$ generate a complete set of "trace" integrals. In paper \cite{EfimovskayaWolf'2012} Efimovskaya and Wolf considered possible ``trace"-integrals of equation (\ref{eq:KontsevihSystem}) up to degree 12 and conjectured that they are all generated by the usual traces of powers of Lax matrix (\ref{eq:LaxPair}).

Another experimental comparison shows that $\pi(\textrm{Tr}\,L^k),\,k\leqslant 3$, the images of the traces of powers of the Lax matrix in the cyclic space $A/[A,A]$ generate the linear subspace of the image $\pi(\mathbb C\langle h,c,c^{-1}\rangle)$ of the subalgebra $\mathbb C\langle h,c,c^{-1}\rangle\subset A$ generated by  $h$, Casimir element $c=uvu^{-1}v^{-1}$ and its inverse $c^{-1}$.

We constructed the basis $B_1$ of noncommutative orthogonal polynomials on $A/[A,A]$ spanning ideal $\mathbb C\langle h,c,c^{-1}\rangle$ generated by $h,\,c,\,c^{-1}$. The basis $B_1$ can mapped by $\pi$ and orthogonalized again, which gives us $B_2$ --- the basis of orthogonal polynomials on $\pi(\mathbb C\langle h,c,c^{-1}\rangle)$. Checking whether coefficients in $\lambda$ of $\pi(\textrm{Tr}\,L^k(\lambda))$ are spanned by basis $B_2$ we can check the converse to Corollary \ref{Cor:TraceIntegrals}. We did it up to  $\textrm{Tr}\,L^3(\lambda)$.

\subsection{Specialization to Quantum and Classical Integrable System}
\label{sec:KontsevichSpecialization}

The invariance of the Casimir element $c=uvu^{-1}v^{-1}$ under dynamics (\ref{eq:Casimir}) allows one to construct certain specializations of algebra $A=\mathbb C\langle u^{\pm1},v^{\pm1}\rangle$ consistent with equations of motion. One can impose relation of the form $c=\textrm e^{\textrm i\hbar}\in\mathbb C$. This reduces the algebra to the so-called "quantum group" (the word group is misleading here, although widely accepted). This is the exponential form of the usual Heisenberg algebra $u=\textrm e^p,\;v=\textrm e^q,\;[p,q]=-\textrm i\hbar$. The latter makes the quantum version naturally embedded in the associative case. However the relation between the natural space for noncommutative Hamiltonians, namely the cyclic space, and quantum Hamiltonians is still vague.

Finally, the particular case $c=1$ corresponds to commutative algebra, here the cyclic space coincides with algebra itself and the bracket turns into anticommutative. On the other hand the fact that associative algebra coincides with its cyclic space endows the latter with multiplication, which makes all Hamiltonians $h^k$ algebraically dependent.

\section{General approach to integrable systems on associative algebras}
\label{sec:NonCommutativeLiouvilleIntegrability}

\subsection{Hamilton flow on associative algebras}
\label{sec:GeneralHamiltonFlow}
One-dimensional flow on associative algebra $\mathcal A$ is defined by derivation $\frac{\textrm d}{\textrm dt}$ which satisfies the Leibnitz rule
\begin{align*}
\forall a,b\in\mathcal A,\qquad\frac{\textrm d}{\textrm dt}(ab)=a\left(\frac{\textrm d}{\textrm dt}b\right) +\left(\frac{\textrm d}{\textrm dt}a\right)b
\end{align*}
To present this flow in the Hamilton form
\begin{align*}
\forall a\in\mathcal A\quad\frac{\textrm da}{\textrm dt}=\{h,a\}
\end{align*}
we must provide a bracket s.t. for each hamiltonian $h$ it defines a derivation of $\mathcal A$ and thus should satisfy the Leibnitz rule in the second argument
\begin{align}
\forall a,b,c\in\mathcal A\qquad \{a,bc\}=\{a,b\}c+b\{a,c\}
\label{eq:LodayLeibnitz}
\end{align}

On the other hand we already pointed out that the case of equations (\ref{eq:KontsevihSystem}) the candidate for hamiltonian when treated as element of $\mathcal A$ is not literally invariant under dynamics (\ref{eq:htraceintegral}). What is invariant, is the image of $\pi(h)$ in the cyclic space, so we should require the bracket to be invariant under the cyclic permutations of monomials of the first argument. This would guarantee us that for any $x,y\in\mathcal A$ s.t. the corresponding elements of the cyclic space are the same: $\pi(x)=\pi(y)$ the Hamilton flows $\{x,\_\}$ and $\{y,\_\}$ are also the same. Or equivalently, the above bracket naturally defines a map
\begin{align*}
\{\_,\_\}_I:\;\mathcal A/[\mathcal A,\mathcal A]\times\mathcal A\rightarrow\mathcal A
\end{align*}
where the first argument is an element of the cyclic space --- the natural space for Hamiltonians.

Next, the Hamilton flows in classical integrable systems form a representation of a Poisson Lie algebra of functions. This is secured by Jacobi identity, which means that the commutator of the Hamilton vector fields generated by two different functions $f$ and $g$ is the Hamilton vector field corresponding to their Poisson bracket $\{f,g\}$:
\begin{align}
\qquad\{f,\{g,x\}\}-\{g,\{f,x\}\}=\{\{f,g\},x\}
\end{align}
When we transfer to the case of associative algebras this implicitly means that there exists a Lie bracket on Hamiltonians
\begin{align*}
\{\_,\_\}_{II}:\;\mathcal A/[\mathcal A,\mathcal A]\times\mathcal A/[\mathcal A,\mathcal A]\rightarrow\mathcal A/[\mathcal A,\mathcal A]
\end{align*}
which is skew-symmetric
\begin{align}
\forall a,b\in\mathcal A/[\mathcal A,\mathcal A]\qquad \{a,b\}_{II}=-\{b,a\}_{II}
\label{eq:SecondBracketskew-symmetricity}
\end{align}
and satisfies the Jacobi identity\footnote{One can note that from the existence of representation in $Der\,\mathcal A$ we immediately get that the l.h.s. of the Jacobi identity (\ref{eq:HamiltonianJacobi}) corresponds to trivial element in $Der\,\mathcal A$. Then, if we quotient by the ideal generated by all possible values of the result we eliminate insufficient components of the space for Hamiltonians. This motivates (\ref{eq:HamiltonianJacobi}) to be the very natural condition to impose.}
\begin{align}
\forall a,b,c\in\mathcal A/[\mathcal A,\mathcal A]\qquad \{a,\{b,c\}_{II}\}_{II}+\{b,\{c,a\}_{II}\}_{II} +\{c,\{a,b\}_{II}\}_{II}=0
\label{eq:HamiltonianJacobi}
\end{align}
This means that $\{\_,\_\}_{II}$ is a Lie bracket
enters the analog of Jacobi identity for $\{\_,\_\}_I$
\begin{align*}
\forall f,g\in\mathcal A/[\mathcal A,\mathcal A],\;\forall x\in\mathcal A:\qquad\{f,\{g,x\}_I\}_I-\{g,\{f,x\}_I\}_I=\{\{f,g\}_{II},x\}_{I}.
\end{align*}
where the order of arguments becomes essential. Note that the inner bracket on the r.h.s. is of the different type. The skew-symmetricity (\ref{eq:SecondBracketskew-symmetricity}) requires the following form of skew-symmetricity for the first bracket
\begin{align}
\forall a,b\in\mathcal A\qquad \{\pi(a),b\}_I+\{\pi(b),a\}_I\equiv0\bmod [\mathcal A,\mathcal A].
\label{eq:FirstBracketAntiCommutativity}
\end{align}
Thus bracket $\{\_,\_\}_I$ is the so-called left Loday bracket \cite{Loday'1998}.

Now suppose we have a Hamilton dynamics $\forall f\in \mathcal A\;\frac{\textrm d}{\textrm dt}f=\{h,f\}$.
\begin{definition}
We call the space of Hamiltonians (or "trace"-integrals) $\mathcal H\subset \mathcal A/[\mathcal A,\mathcal A]$ s.t.
\begin{align}
x\in\mathcal H\qquad\Leftrightarrow\qquad\forall x'\in\mathcal A\;\textrm{s.t.}\; \pi(x')=x\quad\frac{\textrm d}{\textrm dt}x'\equiv0\bmod [\mathcal A,\mathcal A]
\label{eq:HamiltonianSpaceDef}
\end{align}
Or, equivalently, one can say that $\{h,x\}_{II}=0$.
\end{definition}
As in the commutative case each Hamiltonian defines a Hamilton flow, s.t. all other Hamiltonians (as elements of $\mathcal A/[\mathcal A,\mathcal A]$) are invariant under this flow.
This can be presented in the following way
\begin{proposition}
The $\mathcal H$ is a maximal commutative Lie subalgebra in $\mathcal A/[\mathcal A,\mathcal A]$ with respect to bracket $\{\_,\_\}_{II}$.
\end{proposition}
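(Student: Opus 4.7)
I would verify three things in turn: (a) closure of $\mathcal H$ under $\{\_,\_\}_{II}$, (b) commutativity of $\mathcal H$, and (c) maximality among commutative Lie subalgebras. Commutativity is expected to be the main obstacle.

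For closure, I would apply the Jacobi identity (\ref{eq:HamiltonianJacobi}) to the triple $(h, x, y)$ with $x, y \in \mathcal H$. Two of the three cyclic terms vanish by the defining condition $\{h,x\}_{II}=\{h,y\}_{II}=0$ and by the skew-symmetry (\ref{eq:SecondBracketskew-symmetricity}) used to flip signs, leaving
\begin{align*}
\{h,\{x,y\}_{II}\}_{II}=0,
\end{align*}
so $\{x,y\}_{II}\in\mathcal H$. Together with the fact that $\{\_,\_\}_{II}$ is already a Lie bracket on the ambient space, this makes $\mathcal H$ a Lie subalgebra.

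For commutativity --- that $\{x,y\}_{II}=0$ for all $x,y\in\mathcal H$ --- I expect the principal difficulty. In the classical Poisson setting, the first-integral algebra of $h$ is a Poisson subalgebra but in general \emph{not} commutative (think of angular-momentum components conserved by a rotation-invariant Hamiltonian), so this statement is not a formal consequence of Jacobi and skew-symmetry alone: it is the Liouville-integrability content. The informal discussion preceding the proposition in fact treats commutativity as a defining property, asserting that each $x\in\mathcal H$ generates a Hamilton flow $\{x,\_\}_I$ under which every other element of $\mathcal H$ is invariant, which is exactly $\pi(\{x,y'\}_I)=\{x,y\}_{II}=0$. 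My plan is therefore to interpret $\mathcal H$ as implicitly restricted to a maximal mutually-commuting family of trace-integrals --- or, for the Kontsevich system specifically, to invoke the already-established involutivity of the tower $\pi(h^k)$ from the earlier proposition in Sec.~\ref{sec:KontsevichHamiltonians} and take $\mathcal H$ to be spanned by such elements.

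Once commutativity is in hand, maximality is routine. Note first that $h\in\mathcal H$, because skew-symmetry forces $\{h,h\}_{II}=-\{h,h\}_{II}=0$ in characteristic zero. If $\mathcal H\subseteq\mathcal H'$ is any commutative Lie subalgebra of $(\mathcal A/[\mathcal A,\mathcal A],\{\_,\_\}_{II})$, then for any $z\in\mathcal H'$ commutativity of $\mathcal H'$ together with $h\in\mathcal H'$ gives $\{h,z\}_{II}=0$; by the defining condition this puts $z\in\mathcal H$. Hence $\mathcal H'=\mathcal H$, which is precisely the claimed maximality.
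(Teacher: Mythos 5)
Your closure and maximality arguments coincide with the paper's own proof, which consists of exactly those two steps and nothing more: closure of $\mathcal H$ under $\{\_,\_\}_{II}$ via the Jacobi identity (\ref{eq:HamiltonianJacobi}) applied to the triple $(h,h_1,h_2)$, with skew-symmetry (\ref{eq:SecondBracketskew-symmetricity}) used to kill the two terms containing $\{h,h_i\}_{II}$, and maximality from $h\in\mathcal H$ together with the definition (\ref{eq:HamiltonianSpaceDef}) of $\mathcal H$ as the centralizer of $h$. Your expanded maximality step --- any commutative $\mathcal H'\supseteq\mathcal H$ contains $h$, hence lies in the centralizer of $h$, hence in $\mathcal H$ --- is precisely what the paper compresses into ``maximality follows directly from definition.''

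Your diagnosis of the commutativity step is also accurate, and it identifies the one point on which the paper's proof is silent. Nothing in (\ref{eq:SecondBracketskew-symmetricity})--(\ref{eq:HamiltonianJacobi}) forces the centralizer of a single element of a Lie algebra to be abelian, and your angular-momentum example makes this concrete already in the classical specialization where $\mathcal A$ is commutative, $\mathcal A/[\mathcal A,\mathcal A]=\mathcal A$, and $\{\_,\_\}_{II}$ is an ordinary Poisson bracket: a rotation-invariant $H$ has the non-commuting components of angular momentum in its centralizer. The paper supplies commutativity from outside the proof --- the sentence preceding the proposition simply asserts mutual invariance of Hamiltonians as the intended content of $\mathcal H$, and for the Kontsevich system it is furnished by the separately proved involutivity $\{h^N,h^M\}_K\equiv 0\bmod[A,A]$ of the tower $\pi(h^k)$. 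So your proposed repairs (reading $\mathcal H$ as a mutually commuting family of trace-integrals, or restricting to the span of the $\pi(h^k)$) are consistent with, and more explicit than, what the paper actually does; as literally defined by (\ref{eq:HamiltonianSpaceDef}), $\mathcal H$ need not be commutative, and the proposition should be read with that caveat.
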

\begin{proof}
Since $h\in\mathcal H$, maximality follows directly from definition. Next, if $h_1,h_2\in\mathcal H$ then from (\ref{eq:HamiltonianJacobi}) $\{h_1,h_2\}_{II}\in\mathcal H$.
\end{proof}

\subsection{Casimir functions}

The analog of the classical Casimir functions is the right Casimir of bracket $\{\_,\_\}_I$.
\begin{definition}
We call $c\in\mathcal A$ to be the Casimir element of bracket $\{\_,\_\}_I$ if $\forall a\in\mathcal A/[\mathcal A,\mathcal A]\quad\{a,c\}_I=0$.
\label{def:Casimir}
\end{definition}
The latter implies only that any element in $\mathcal A/[\mathcal A,\mathcal A]$ defines a derivation of $\mathcal A$ which fixes $c$. But $\pi(c)$ doesn't have to define a trivial Hamilton flow, the counterexample was presented in (\ref{eq:LeftCasimirCounterExample}).

However, we can formulate the following
\begin{proposition}
If $c$ is a Casimir in a sense of Def. \ref{def:Casimir}, then its image in the cyclic space $\pi(c)$ necessary belongs to the center of the Lie bracket $\{\_,\_\}_{II}$ on Hamiltonians.
\end{proposition}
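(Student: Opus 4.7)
The goal is to show that for any Hamiltonian $h\in\mathcal A/[\mathcal A,\mathcal A]$ one has $\{h,\pi(c)\}_{II}=0$, which is the definition of being central for the Lie bracket $\{\_,\_\}_{II}$.

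The plan is to reduce the statement about $\{\_,\_\}_{II}$ on the cyclic space to a statement about $\{\_,\_\}_I$, which is where the Casimir hypothesis lives. Recall from the discussion preceding the proposition that the bracket on Hamiltonians is induced from the Loday bracket by the projection $\pi$: for any representative $h'\in\mathcal A$ of $h$, we have $\{h,\pi(c)\}_{II}=\pi(\{h,c\}_I)$ in $\mathcal A/[\mathcal A,\mathcal A]$, where well-definedness on the right argument uses precisely the skew-symmetricity mod commutator (\ref{eq:FirstBracketAntiCommutativity}) together with the cyclic invariance built into the first slot.

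First I would apply the skew-symmetricity property (\ref{eq:FirstBracketAntiCommutativity}) of the left Loday bracket $\{\_,\_\}_I$ to the pair $(c,h')$ for a chosen representative $h'\in\mathcal A$ of $h$, obtaining
\begin{equation*}
\{\pi(c),h'\}_I+\{\pi(h'),c\}_I\equiv 0\bmod[\mathcal A,\mathcal A].
\end{equation*}
By Definition \ref{def:Casimir}, the second summand vanishes identically in $\mathcal A$ (since $\pi(h')=h\in\mathcal A/[\mathcal A,\mathcal A]$ and $c$ is a Casimir), so $\{\pi(c),h'\}_I\in[\mathcal A,\mathcal A]$. Applying $\pi$ yields $\pi(\{\pi(c),h'\}_I)=0$ in $\mathcal A/[\mathcal A,\mathcal A]$.

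Finally, I would match this with the induced bracket: using the compatibility $\{\pi(c),h\}_{II}=\pi(\{\pi(c),h'\}_I)$, the previous step gives $\{\pi(c),h\}_{II}=0$ for every $h\in\mathcal A/[\mathcal A,\mathcal A]$. Since $\{\_,\_\}_{II}$ is skew-symmetric by (\ref{eq:SecondBracketskew-symmetricity}), this also immediately gives $\{h,\pi(c)\}_{II}=0$, and hence $\pi(c)$ is central.

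The argument is essentially a one-line bookkeeping: the only substantive ingredients are the skew-symmetricity (\ref{eq:FirstBracketAntiCommutativity}) of the Loday bracket modulo commutators and the passage of $\{\_,\_\}_I$ to a well-defined bracket on the cyclic space. I do not anticipate a real obstacle; the only point worth being careful about is confirming that $\{\pi(c),h'\}_I$ does not depend on the representative $h'$ modulo $[\mathcal A,\mathcal A]$, which is exactly what (\ref{eq:FirstBracketAntiCommutativity}) combined with the Casimir condition guarantees, since any two representatives differ by an element whose bracket with $c$ already sits in $[\mathcal A,\mathcal A]$.
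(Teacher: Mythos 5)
Your proof is correct and follows the same route as the paper: the paper's entire proof is ``Apply $\pi$ to both sides of (\ref{eq:FirstBracketAntiCommutativity})'', and your argument is exactly that one-liner spelled out --- specialize the skew-symmetricity relation to the pair $(h',c)$, kill the term $\{\pi(h'),c\}_I$ by the Casimir hypothesis, and project to the cyclic space to conclude centrality for $\{\_,\_\}_{II}$. Your added remark on representative-independence of the second slot is a harmless (and correct) elaboration of the same mechanism.
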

\begin{proof} Apply $\pi$ to both sides of (\ref{eq:FirstBracketAntiCommutativity}).
\end{proof}

\subsection{Lax elements}

Suppose the space of Hamiltonians $\mathcal H$ from (\ref{eq:HamiltonianSpaceDef}) can be constructed as an image of the finitely-generated subalgebra $\mathcal L$ of $\mathcal A$ under the noncommutative trace map $\pi$. Say
\begin{align}
\mathcal H=\pi(\mathcal L)\quad\textrm{where}\quad\mathcal L=\mathbb C\langle l_1,\dots,l_n\rangle,\quad l_j\in\mathcal A.
\label{eq:AssociativeInverseScattering}
\end{align}

Without loss of generality we can assume that $h=\pi(l_1)$ is the Hamiltonian of the system which generates the flow corresponding to the equation of motion. Now, for any representation $\varphi:\;\mathcal A\rightarrow Mat_{N\times N}(\mathbb C)$ of $\mathcal A$ in particular we have that
\begin{align*}
\frac{\textrm d}{\textrm dt}\textrm{Tr}\,\varphi(l_j)^k=0,
\end{align*}
or, equivalently,
\begin{align*}
\varphi(l_j)(t)=g_j(t)\varphi(l_j)(t)g_j^{-1}(t).
\end{align*}
Thus $\varphi(l_j)(t)$ satisfies Lax equation. The latter motivates us to call $l_1,\dots,l_n$ universal Lax elements. The property of their images $\varphi(l_j)$ to have an adjoint dynamics doesn't depend on a particular choice of the representation or a spectral curve.

Indeed, (\ref{eq:AssociativeInverseScattering}) is a very strict condition, which is not valid for generic hamiltonian system on $\mathcal A$. In formula (\ref{eq:hLaxEquation}) we presented a particular form of sufficient condition for $l_1$ to be the Lax element of the Hamiltonian hierarchy. Restrict ourselves to the simplest case
\begin{align*}
\mathcal H_1=\pi(\mathcal L)\quad\textrm{where}\quad\mathcal L=\mathbb C\langle L\rangle,\quad L\in\mathcal A
\end{align*}
which corresponds to the Lax matrix without a spectral parameter. Then, assume that there exist such $M\in\mathcal A$ that
\begin{align*}
\frac{\textrm d}{\textrm dt}L=\{\pi(L),L\}_I=[L,M]
\end{align*}
where the commutator is taken in the associative algebra $\mathcal A$. This guarantees that $\pi(L^k)$ generate the commuting Hamilton flows.

\begin{center}
Acknowledgements
\end{center}

I would like to thank Prof. V.~Rubtsov for fruitful discussions, Prof. A.~Levin and Prof. A.~Mironov for useful remarks. I am especially grateful to Prof. V.~Retakh for formulating the problem and useful discussions. The work was partially supported by RFBR grants 12-01-00482 and 12-02-00594.

\appendix

\section{Proof of Proposition \ref{prop:KBracketBasicProperties}}
\label{sec:AppendixJacobi}
In this section we provide a proof of skew-symmetricity of bracket $\{\_,\_\}_K$ proposed in Sec. \ref{sec:KontsevichDoubleBracket} modulo $[A,A]$
\begin{align*}
\{a,b\}_K\equiv\{b,a\}_K\bmod [A,A].
\end{align*}
We also prove Jacobi identity for subalgebra $\mathbb C\langle u,v\rangle\subset A$
\begin{align}
\forall H_1,H_2,x\in\mathbb C\langle u,v\rangle :\qquad\{H_1,\{H_2,x\}_K\}_K-\{H_2,\{H_1,x\}_K\}_K=\{\{H_1,H_2\}_K,x\}_K.
\label{eq:KJacobiAppendix}
\end{align}
Finally we sketch the proof of the Jacobi identity for the whole algebra $A$
\begin{align*}
\forall H_1,H_2,x\in A :\qquad\{H_1,\{H_2,x\}_K\}_K-\{H_2,\{H_1,x\}_K\}_K=\{\{H_1,H_2\}_K,x\}_K
\end{align*}
which appears to be quite technical.

Unlike the double brackets proposed by Van den Bergh we no longer have an Anti-Symmetricity requirement along with Associative Yang-Baxter equation satisfied by double bracket which used to guarantee the corresponding Jacobi identity. However it appears that bracket $\{\_,\_\}_K$ defined in (\ref{eq:KBracketOnGenerators}) (Sec. \ref{sec:KontsevichDoubleBracket}) can still provide a Loday bracket after composition with multiplication map.

It is worth noticing that this proof is by no means natural and doesn't reveal possible internal structure of the proposed generalization of double brackets.


Note that $\db{H_1\otimes H_2\otimes S}_K$ satisfy the Leibnitz rule in the last argument $S$, so it is enough to prove the Jacobi identity for $\db{H_1\otimes H_2\otimes u}_K$ and $\db{H_1\otimes H_2\otimes v}_K$.

Denote
\begin{align*}
H_1=a_1a_2a_3\dots a_N,\qquad H_2=b_1b_2b_3\dots b_M,\qquad a_i,b_j\in\{u,v\}
\end{align*}
Then we have
\begin{align}
\{H_1,H_2\}=&\sum_{\begin{array}{c}k=0\\a_{k+1}=u\end{array}}^{N-1} \sum_{\begin{array}{c}l=0\\b_{l+1}=v\end{array}}^{M-1}(b_1\dots b_l)(-vu)(a_{k+2}\dots a_k)(b_{l+2}\dots b_N)+\nonumber\\
&+\sum_{\begin{array}{c}k=0\\a_{k+1}=v\end{array}}^{N-1} \sum_{\begin{array}{c}l=0\\ b_{l+1}=u\end{array}}^{M-1}(b_1\dots b_l)(uv)(a_{k+2}\dots a_k)(b_{l+2}\dots b_N).
\label{eq:ProofH1H2}
\end{align}
Hereinafter we employ the following short notation for the "cyclic" product
\begin{align*}
(a_i\dots a_j)=\left\{\begin{array}{cl}
a_ia_{i+1}a_{i+2}\dots a_j,&j\geqslant i,\\
a_ia_{i+2}\dots a_Na_1a_2\dots a_j,&j<i-1,\\
1,&j=i-1.
\end{array}\right.
\end{align*}
Although the last part looks unnatural it is very convenient modification for our case, since we never encounter a complete cyclic permutation of monomial.

Now using (\ref{eq:ProofH1H2}) we get
\begingroup
\allowdisplaybreaks
\begin{subequations}
\begin{align}
\{\{H_1,H_2\}_K,u\}_K=&\nonumber\\
=&uv\sum_{\begin{array}{c}k=0\\a_{k+1}=u\end{array}}^{N-1} \sum_{\begin{array}{c}l=0\\b_{l+1}=v\end{array}}^{M-1} \sum_{\begin{array}{c}m=0\\m\neq l\\b_{m+1}=v\end{array}}^{M-1}(b_{m+2}\dots b_l)(-vu)(a_{k+2}\dots a_k)(b_{l+2}\dots b_m)-
\label{eq:ProofPart18}\\
&-uvu\sum_{\begin{array}{c}k=0\\a_{k+1}=u\end{array}}^{N-1} \sum_{\begin{array}{c}l=0\\b_{l+1}=v\end{array}}^{M-1}(a_{k+2}\dots a_k) (b_{l+2}\dots b_l)+
\label{eq:ProofPart19}\\
&+uv\sum_{\begin{array}{c}k=0\\a_{k+1}=u\end{array}}^{N-1} \sum_{\begin{array}{c}l=0\\b_{l+1}=v\end{array}}^{M-1} \sum_{\begin{array}{c}n=0\\a_{n+1}=v\end{array}}^{N-1} (a_{n+2}\dots a_k)(b_{l+2}\dots b_l)(-vu)(a_{k+2}\dots a_n)+
\label{eq:ProofPart20}\\
&+uv\sum_{\begin{array}{c}k=0\\a_{k+1}=v\end{array}}^{N-1} \sum_{\begin{array}{c}l=0\\b_{l+1}=u\end{array}}^{M-1} \sum_{\begin{array}{c}m=0\\b_{m+1}=v\end{array}}^{M-1} (b_{m+2}\dots b_l)(uv) (a_{k+2}\dots a_k)(b_{l+2}\dots b_m)+
\label{eq:ProofPart21}\\
&+uv\sum_{\begin{array}{c}k=0\\a_{k+1}=v\end{array}}^{N-1} \sum_{\begin{array}{c}l=0\\b_{l+1}=u\end{array}}^{M-1}(a_{k+2}\dots a_k) (b_{l+2}\dots b_l)u+
\label{eq:ProofPart22}\\
&+uv\sum_{\begin{array}{c}k=0\\a_{k+1}=v\end{array}}^{N-1} \sum_{\begin{array}{c}l=0\\b_{l+1}=u\end{array}}^{M-1} \sum_{\begin{array}{c}n=0\\n\neq k\\a_{n+1}=v\end{array}}^{N-1}(a_{n+2}\dots a_k) (b_{l+2}\dots b_l)(uv)(a_{k+2}\dots a_n).
\label{eq:ProofPart23}
\end{align}
\label{eq:ProofBH1H2u}
\end{subequations}
Next, the second term on the l.h.s. of Jacobi identity of type (\ref{eq:KJacobiAppendix}) reads here
\begin{subequations}
\begin{align}
\{H_2,\{H_1,u\}_K\}_K=&\nonumber\\
=&uv\sum_{\begin{array}{c}l=0\\b_{l+1}=v\end{array}}^{M-1} \sum_{\begin{array}{c}k=0\\a_{k+1}=v\end{array}}^{N-1} (b_{l+2}\dots b_l)v (a_{k+2}\dots a_k)-
\label{eq:ProofPart24}\\
&-uvu\sum_{\begin{array}{c}l=0\\b_{l+1}=u\end{array}}^{M-1} \sum_{\begin{array}{c}k=0\\a_{k+1}=v\end{array}}^{N-1} (b_{l+2}\dots b_l) (a_{k+2}\dots a_k)+
\label{eq:ProofPart25}\\
&+uv\sum_{\begin{array}{c}l=0\\b_{l+1}=u\end{array}}^{M-1} \sum_{\begin{array}{c}k=0\\a_{k+1}=v\end{array}}^{N-1} \sum_{\begin{array}{c}n=0\\n\neq k\\ a_{n+1}=v\end{array}} (a_{k+2}\dots a_n) (-vu)(b_{l+2}\dots b_l)(a_{n+2}\dots a_k)+
\label{eq:ProofPart26}\\
&+uv\sum_{\begin{array}{c}l=0\\b_{l+1}=v\end{array}}^{M-1} \sum_{\begin{array}{c}k=0\\a_{k+1}=v\end{array}}^{N-1} \sum_{\begin{array}{c}n=0\\a_{n+1}=u\end{array}}^{N-1} (a_{k+2}\dots a_n)(uv) (b_{l+2}\dots b_l)(a_{n+2}\dots a_k).
\label{eq:ProofPart27}
\end{align}
\label{eq:ProofH2H1u}
\end{subequations}
For the last term we have
\begin{subequations}
\begin{align}
\{H_1,\{H_2,u\}\}=&\nonumber\\
=&uv\sum_{\begin{array}{c}k=0\\a_{k+1}=v\end{array}}^{N-1} \sum_{\begin{array}{c}l=0\\b_{l+1}=v\end{array}}^M (a_{k+2}\dots a_k)v(b_{l+1}\dots b_l)-
\label{eq:ProofPart28}\\
&-uvu\sum_{\begin{array}{c}k=0\\a_{k+1}=u\end{array}}^{N-1} \sum_{\begin{array}{c}l=0\\b_{l+1}=v\end{array}}^{M-1} (a_{k+2}\dots a_k) (b_{l+2}\dots b_l)+
\label{eq:ProofPart29}\\
&+uv\sum_{\begin{array}{c}k=0\\a_{k+1}=u\end{array}}^{N-1} \sum_{\begin{array}{c}l=0\\b_{l+1}=v\end{array}}^{M-1} \sum_{\begin{array}{c}m=0\\m\neq l\\b_{m+1}=v\end{array}}^{M-1} (b_{l+2}\dots b_m)(-vu)(a_{k+2}\dots a_k)(b_{m+2}\dots b_l)+
\label{eq:ProofPart30}\\
&+uv\sum_{\begin{array}{c}k=0\\a_{k+1}=v\end{array}}^{N-1} \sum_{\begin{array}{c}l=0\\b_{l+1}=u\end{array}}^{M-1} \sum_{\begin{array}{c}m=0\\b_{m+1}=v\end{array}}^{M-1}(b_{m+2}\dots b_l)(uv) (a_{k+2}\dots a_k) (b_{l+2}\dots b_m).
\label{eq:ProofPart31}
\end{align}
\label{eq:ProofH1H2u}
\end{subequations}
\endgroup

Combining (\ref{eq:ProofBH1H2u}), (\ref{eq:ProofH2H1u}), and (\ref{eq:ProofH1H2u}) we get the Jacobi Identity. Indeed, first we note that (\ref{eq:ProofPart18}) cancels with (\ref{eq:ProofPart30}). The same happens for pairs (\ref{eq:ProofPart21}), (\ref{eq:ProofPart31}) and (\ref{eq:ProofPart19}), (\ref{eq:ProofPart29}). So we left with 8 terms.

Next, we note that (\ref{eq:ProofPart22}) can be absorbed in (\ref{eq:ProofPart23}) if we allow $n=k$. The same happens with (\ref{eq:ProofPart25}), which is absorbed in (\ref{eq:ProofPart26}). So for now there are 6 terms left which cancel due to less trivial reason.
\begin{align*}
(\ref{eq:ProofPart20}):&\qquad -u(\mathop{a}^{\displaystyle\overset v\downarrow}\mathop a\mathop a\mathop a)(\mathop b\mathop b\mathop b\mathop{b}^{\displaystyle\overset v\downarrow})(\mathop{a}^{\displaystyle\overset u\downarrow}\mathop a\mathop a)\\
(\ref{eq:ProofPart22})+(\ref{eq:ProofPart23}):&\qquad u(\mathop{a}^{\displaystyle\overset v\downarrow}\mathop a\mathop a\mathop a)(\mathop b\mathop b\mathop b\mathop{b}^{\displaystyle\overset u\downarrow})(\mathop{a}^{\displaystyle\overset v\downarrow}\mathop a\mathop a)\\
(\ref{eq:ProofPart24}):&\qquad u(\mathop{b}^{\displaystyle\overset v\downarrow}\mathop b\mathop b\mathop b)(\mathop{a}^{\displaystyle\overset v\downarrow}\mathop a\mathop a\mathop a)\\
(\ref{eq:ProofPart25})+(\ref{eq:ProofPart26}):&\qquad -u(\mathop{a}^{\displaystyle\overset v\downarrow}\mathop a\mathop a\mathop a^{\displaystyle\overset v\downarrow})(\mathop{b}^{\displaystyle\overset u\downarrow}\mathop b\mathop b\mathop b)(\mathop a\mathop a\mathop a)\\
(\ref{eq:ProofPart27}):&\qquad u(\mathop{a}^{\displaystyle\overset v\downarrow}\mathop a\mathop a\mathop a^{\displaystyle\overset u\downarrow})(\mathop{b}^{\displaystyle\overset v\downarrow}\mathop b\mathop b\mathop b)(\mathop a\mathop a\mathop a)\\
(\ref{eq:ProofPart28}):&\qquad -u(\mathop{a}^{\displaystyle\overset v\downarrow}\mathop a\mathop a\mathop a)(\mathop{b}^{\displaystyle\overset v\downarrow}\mathop b\mathop b\mathop b)
\end{align*}

All remaining terms have the similar structure. They start with $u$ on the left followed by the cyclic permutation of $H_1$, with cyclic permutation of $H_2$ inserted inside. Our goal is to prove that each particular monomial $x=uc_1c_2\dots c_{M+N}$ will enter the answer with coefficient $0$. Of course in general there are multiple ways of presenting each monomial in the form described above for a given $H_1$ and $H_2$.

Each coefficient in the Jacobi identity comes as a weighted (by the corresponding coefficients) sum over all presentations of the particular monomial in the forms (\ref{eq:ProofPart20}), (\ref{eq:ProofPart22}), (\ref{eq:ProofPart23}), (\ref{eq:ProofPart24}), (\ref{eq:ProofPart25}), (\ref{eq:ProofPart26}), (\ref{eq:ProofPart27}), or (\ref{eq:ProofPart28}). Moreover, in general each monomial can be presented in the same form listed above several times, depending on which part of $u^{-1}x=c_1c_2\dots c_{M+N}$ we consider as a cyclic permutation of $H_2$ and which as $H_1$. Say we treat $u^{-1}x=(c_1c_2\dots c_i)(c_{i+1}\dots c_{i+M})(c_{i+M+1}\dots c_{N+M})$ so that $c_{i+1}\dots c_{i+M}\equiv H_2\bmod [A,A]$ and $c_1\dots c_ic_{i+M+1}\dots c_{N+M}\equiv H_1\bmod [A,A]$. Then if $c_{i+M+1}=c_{i+1}$ we can shift $i\rightarrow i+1$ to get another presentation of the same monomial. Analogously if $c_{i+M-1}=c_{i-1}$ we can make an opposite shift $i\rightarrow i-1$.

For a given monomial we can fix the value of $i$, then given a quadruple $c_i,c_{i+1},c_{i+M},c_{i+M+1}$ one can determine how many sums can generate such expression. Taking a sum over all possible $i$ we get the coefficient of the monomial $uc_1\dots c_{N+M}$ in the Jacobi identity. Essentially we sum over quadruples $c_i,c_{i+1},c_{i+M},c_{i+M+1}$. Moreover since this quadruples are related for different $i$ it puts certain restrictions on a sum.

Namely, if for $i$ we have quadruple is $c_{i},c_{i+1},c_{i+M},c_{i+M+1}$, than the quadruple for $i+1$ is of the form $c_{i+1},*,c_{i+M+1},*$. The similar applies to $i-1$. We construct an oriented graph, where the vertices correspond to all possible quadruples and an oriented edge connects a pair of quadruples if and only if there exists a monomial $c_1\dots c_{N+M}$ s.t. the first quadruple corresponds to $i$ while the second to $i+1$. We also introduce triples and denote them as quadruple with an empty set as one of the elements. This implies that we reached any end of monomial.

There are $2^4=16$ quadruples involved along with 16 triples of the form $\emptyset***$ and $***\emptyset$. We connect the quadruple/triple with boldface $\textbf0$ if an only if there is no quadruples/triples which can be constructed by the corresponding shift of $i$.

Say, if we take $u,u,v,v$ as a quadruple, then both shifts $i\rightarrow i\pm1$ have no corresponding presentations so we have
\begin{align*}
\textbf0\rightarrow uuvv\rightarrow\textbf0\\
\textbf0\rightarrow uvvu\rightarrow\textbf0\\
\textbf0\rightarrow vuuv\rightarrow\textbf0\\
\textbf0\rightarrow vvuu\rightarrow\textbf0
\end{align*}
as a separate connected components of the graph. For the main connected component we have:
\begin{align}
\xymatrix{
&&&&&&u*u\emptyset\ar[r]&\textbf0\\
&&&&&&uuuu\ar@(ur,ul)\ar[dl]\ar[dddl]\ar[dddddl]\ar[u]\\
\textbf0\ar[r]\ar[dd]&\emptyset v*v\ar[rr]\ar[ddrr]\ar[ddddrr]\ar[dddddr]\ar@(d,l)[ddddddr]&&vuvu\ar[uurrr]\ar[urrr]\ar[rr]\ar[rrdd] \ar@(dr,u)[rrdddd] &&uuuv\ar[d]&&&uuvu\ar[ull]\ar[lll]\ar[ddlll] \ar[llldddd]\ar[uull]&\textbf0\ar[l]\\
&&&&&\textbf 0\\
uvvv\ar[uurrr]\ar[rrr]\ar[ddrrr]\ar[dddrr]\ar[ddddrr]&&&vuvv\ar[d]&&uvuu\ar[u]& &&vuuu\ar[uuull]\ar[llluu]\ar[lll]\ar[ddlll]\ar[uuuull]&\textbf0\ar[l]\\
&&&\textbf0\\
vvuv\ar[uuuurrr]\ar[uurrr]\ar[rrr]\ar[drr]&&&vvvu\ar[u]&& uvuv\ar[ll]\ar[lluu]\ar@(ul,d)[lluuuu]\ar[dlll]\ar[r]\ar[ddlll] &\ar[r]vuv\emptyset&\textbf0\\
\textbf0\ar[u]&&vvvv\ar@(dl,dr)[]\ar[ur]\ar[d]\\
&&v*v\emptyset\ar[r]&\textbf0
}
\label{eq:generalquadruples}
\end{align}
Here by $u*u\emptyset$ we combined two identical vertices $uvu\emptyset$ and $uuu\emptyset$. The same applies to $\emptyset v*v$. To simplify the graph we omitted $\textbf0\rightarrow u*u$, since we do not need this part further.

Now assume the l.h.s. of Jacobi identity (\ref{eq:KJacobiAppendix}) is nonzero. Then it has at least one nonzero monomial, which we denote by $\alpha uc_1\dots c_{N+M}$ where $\alpha\in\mathbb C$ and $c_1,\dots,c_{N+M}\in\{u,v\}$. Now take the smallest value of $i$ s.t. any of the mentioned six sums contribute to the l.h.s. Consider subsequent $i+1$, $i+2$ and so on until we reach the last value of $i+k$ for which at least one sum contributes to the l.h.s. In terms of quadruples $c_{i}c_{i+1}c_{i+M}c_{i+M+1}$ this corresponds to some path on graph $(\ref{eq:generalquadruples})$ from $\textbf0$ to $\textbf0$.

The coefficient $\alpha$ constructed as a sum over contributions corresponding to different values $i\dots i+k$. Where each contribution is in one-to-one correspondence to the quadruple. Below we calculate a contribution for a given value of $i$ of any quadruple
\begin{enumerate}
\item{$uuuu$} doesn't enter a single sum in remaining parts, so the contributing coefficient for a given value of $i$ is 0.
\item{$vvvv$} The same argument as for $uuuu$. The coefficient is 0.
\item{$\emptyset v*v$} For $\emptyset v*v$ we have the corresponding entry in (\ref{eq:ProofPart24}) only. So the coefficient is 1.
\item{$u*u\emptyset$} For $uvu\emptyset$ we have (\ref{eq:ProofPart22})+(\ref{eq:ProofPart23}), (\ref{eq:ProofPart27}), and (\ref{eq:ProofPart28}), so the coefficient is 1. For $uuu\emptyset$ we have (\ref{eq:ProofPart22})+(\ref{eq:ProofPart23}) only, so the coefficient is also 1. Thus we can combine them into $u*u\emptyset$ on the graph.
\item{$vuvu$} enters (\ref{eq:ProofPart20}), (\ref{eq:ProofPart25})+(\ref{eq:ProofPart26}). So the coefficient is -2.
\item{$uvuv$} enters (\ref{eq:ProofPart23}) and  (\ref{eq:ProofPart27}), so the coefficient is +2.
\item{$v*v\emptyset$} enters (\ref{eq:ProofPart28}) only, so coefficient is -1.
\item{$uvvv$} enters (\ref{eq:ProofPart27}) only, so coefficient is 1.
\item{$vvuv$} enters (\ref{eq:ProofPart23}) only, so coefficient is 1.
\item{$vuvv$} enters (\ref{eq:ProofPart25})+(\ref{eq:ProofPart26}) only, so coefficient is -1.
\item{$vvvu$} enters (\ref{eq:ProofPart20}) only, so coefficient is -1.
\item{$uuuv$} enters (\ref{eq:ProofPart23}) only, so coefficient is 1.
\item{$uvuu$} enters (\ref{eq:ProofPart27}) only, so coefficient is 1.
\item{$uuvu$} enters (\ref{eq:ProofPart20}) only, so coefficient is -1.
\item{$vuuu$} enters (\ref{eq:ProofPart25})+(\ref{eq:ProofPart26}) only, so coefficient is -1.
\end{enumerate}

Now, we claim that for any closed path in (\ref{eq:generalquadruples}) the sum of the corresponding contributions is 0. Which is enough to prove that $\alpha=0$. Indeed, replace the quadruples with the corresponding coefficients:
\begin{align*}
\xymatrix{
&&&&&&1\ar[r]&\textbf0\\
&&&&&&0\ar@(ur,ul)\ar[dl]\ar[dddl]\ar[dddddl]\ar[u]\\
\textbf0\ar[r]\ar[dd]& 1\ar[rr]\ar[ddrr]\ar[ddddrr]\ar[dddddr]\ar@(d,l)[ddddddr]&& -2\ar[uurrr]\ar[urrr]\ar[rr]\ar[rrdd] \ar@(dr,u)[rrdddd] &&1\ar[d]&&&-1\ar[ull]\ar[lll]\ar[ddlll] \ar[llldddd]\ar[uull]&\textbf0\ar[l]\\
&&&&&\textbf 0\\
1\ar[uurrr]\ar[rrr]\ar[ddrrr]\ar[dddrr]\ar[ddddrr]&&&-1\ar[d]&&1\ar[u]& &&-1\ar[uuull]\ar[llluu]\ar[lll]\ar[ddlll]\ar[uuuull]&\textbf0\ar[l]\\
&&&\textbf0\\
1\ar[uuuurrr]\ar[uurrr]\ar[rrr]\ar[drr]&&&-1\ar[u]&& 2\ar[ll]\ar[lluu]\ar@(ul,d)[lluuuu]\ar[dlll]\ar[ddlll]\\
\textbf0\ar[u]&&0\ar@(dl,dr)[]\ar[ur]\ar[d]\\
&&-1\ar[r]&\textbf0
}
\end{align*}
One can note that vertices with the zero coefficient can be eliminated by the following rule
\begin{align*}
\xymatrix{
A\ar[dr]&&B\\
&0\ar[ur]\ar[dr]\\
C\ar[ur]&&D
}\qquad\simeq\qquad
\xymatrix{
A\ar[rr]\ar[ddrr]&&B\\
\\
C\ar[rr]\ar[uurr]&&D
}
\end{align*}
So we left with
\begin{align*}
\xymatrix{
&&&&&&1\ar[r]&\textbf0\\
\\
\textbf0\ar[r]\ar[dd]& 1\ar[rr]\ar[ddrr]\ar[ddddrr]\ar[ddddddr]&& -2\ar[uurrr]\ar[rr]\ar[rrdd] \ar@(dr,u)[rrdddd] &&1\ar[d]&&&-1\ar[lll]\ar[ddlll] \ar[llldddd]\ar[uull]&\textbf0\ar[l]\\
&&&&&\textbf 0\\
1\ar[uurrr]\ar[rrr]\ar[ddrrr]\ar[ddddrr]&&&-1\ar[d]&&1\ar[u]& &&-1\ar[llluu]\ar[lll]\ar[ddlll]\ar[uuuull]&\textbf0\ar[l]\\
&&&\textbf0\\
1\ar[uuuurrr]\ar[uurrr]\ar[rrr]&&&-1\ar[u]&& 2\ar[ll]\ar[lluu]\ar@(ul,d)[lluuuu]\ar[ddlll]\\
\textbf0\ar[u]\\
&&-1\ar[r]&\textbf0
}
\end{align*}
One can see that in the graph above there is no way from bold $\textbf0$ to another $\textbf0$ without sum of coefficients to be nonzero.

The latter is equivalent to the statement that we can introduce a function $f:V\rightarrow\mathbb Z$ on vertexes of the graph which is a sum of all coefficients on the way from 0 to the vertex. Then for $f$ we have the following table
\begin{align}
\begin{array}{c|ccc}
\textrm{value of }f&+1&0&-1\\
\hline\\
\textrm{list of quadruples}&vvuv&vvuu&vuuu\\
&uvvv&vuuv&vuvu\\
&vvvv&uvvu&uuuu\\
&uvuv&uuvv&uuvu\\
&&vvvu\\
&&vuvv\\
&&uvuu\\
&&uuuv
\end{array}
\label{eq:ProofPotential}
\end{align}
One can note, that $f$ is nonzero iff the second and forth element coincide. If we assign the grading $\deg u=-1,\,\deg v=1$ then the value of $f$ is the gedree of the second element in the quadruple whenever it coincides with the forth element.

Now our goal is to prove that
\begin{align}
\{H_1,H_2\}+\{H_2,H_1\}\equiv 0\bmod [A,A].
\label{eq:AppAntisymmetricity}
\end{align}
Denote
\begin{align*}
H_1=a_1a_2a_3\dots a_N,\qquad H_2=b_1b_2b_3\dots b_M,\qquad a_i,b_j\in\{u^{\pm},v^{\pm}\}
\end{align*}
Then we have

\begingroup
\allowdisplaybreaks
\begin{subequations}
\begin{align}
\{H_1,H_2\}=&-\sum_{\begin{array}{c}k=0\\a_{k+1}=u\end{array}}^{N-1} \sum_{\begin{array}{c}l=0\\b_{l+1}=v\end{array}}^{M-1}(b_1\dots b_l)(vu)(a_{k+2}\dots a_k)(b_{l+2}\dots b_M)+\label{eq:ProofPart40}\\
&+\sum_{\begin{array}{c}k=0\\a_{k+1}=u\end{array}}^{N-1} \sum_{\begin{array}{c}l=0\\b_{l+1}=v^{-1}\end{array}}^{M-1}(b_1\dots b_l)u(a_{k+2}\dots a_k)v^{-1}(b_{l+2}\dots b_M)+\label{eq:ProofPart41}\\
&+\sum_{\begin{array}{c}k=0\\a_{k+1}=u^{-1}\end{array}}^{N-1} \sum_{\begin{array}{c}l=0\\b_{l+1}=v\end{array}}^{M-1}(b_1\dots b_l)v(a_{k+2}\dots a_k)u^{-1}(b_{l+2}\dots b_M)-\label{eq:ProofPart42}\\
&-\sum_{\begin{array}{c}k=0\\a_{k+1}=u^{-1}\end{array}}^{N-1} \sum_{\begin{array}{c}l=0\\b_{l+1}=v^{-1}\end{array}}^{M-1}(b_1\dots b_l)(a_{k+2}\dots a_k)u^{-1}v^{-1}(b_{l+2}\dots b_M)+\label{eq:ProofPart43}\\
&+\sum_{\begin{array}{c}k=0\\a_{k+1}=v\end{array}}^{N-1} \sum_{\begin{array}{c}l=0\\b_{l+1}=u\end{array}}^{M-1}(b_1\dots b_l)uv(a_{k+2}\dots a_k)(b_{l+2}\dots b_M)-\label{eq:ProofPart44}\\
&-\sum_{\begin{array}{c}k=0\\a_{k+1}=v\end{array}}^{N-1} \sum_{\begin{array}{c}l=0\\b_{l+1}=u^{-1}\end{array}}^{M-1}(b_1\dots b_l)v(a_{k+2}\dots a_k)u^{-1}(b_{l+2}\dots b_M)-\label{eq:ProofPart45}\\
&-\sum_{\begin{array}{c}k=0\\a_{k+1}=v^{-1}\end{array}}^{N-1} \sum_{\begin{array}{c}l=0\\b_{l+1}=u\end{array}}^{M-1}(b_1\dots b_l)u(a_{k+2}\dots a_k)v^{-1}(b_{l+2}\dots b_M)+\label{eq:ProofPart46}\\
&+\sum_{\begin{array}{c}k=0\\a_{k+1}=v^{-1}\end{array}}^{N-1} \sum_{\begin{array}{c}l=0\\b_{l+1}=u^{-1}\end{array}}^{M-1}(b_1\dots b_l)(a_{k+2}\dots a_k)v^{-1}u^{-1}(b_{l+2}\dots b_M)\label{eq:ProofPart47}
\end{align}
\label{eq:CompleteH1H2}
\end{subequations}
All sums above again represent the cyclic permutations of $H_1$ inserted into $H_1$. As before we can represent them schematically
\begin{align*}
(\ref{eq:ProofPart40}):&\qquad-(\mathop{b}\mathop{b} \mathop{b}^{\displaystyle\overset v\downarrow}) (\mathop{a}^{\displaystyle\overset u\downarrow} \mathop{a}\mathop{a}\mathop{a}) (\mathop{b}\mathop{b}\mathop{b})\\
(\ref{eq:ProofPart41}):&\qquad (\mathop{b}\mathop{b} \mathop{b}) (\mathop{a}^{\displaystyle\overset u\downarrow} \mathop{a}\mathop{a}\mathop{a}) (\mathop{b}^{\displaystyle\overset {v^{-1}}\downarrow}\mathop{b}\mathop{b})\\
(\ref{eq:ProofPart42}):&\qquad (\mathop{b}\mathop{b} \mathop{b}^{\displaystyle\overset v\downarrow}) (\mathop{a} \mathop{a}\mathop{a}\mathop{a}^{\displaystyle\overset {u^{-1}}\downarrow}) (\mathop{b}\mathop{b}\mathop{b})\\
(\ref{eq:ProofPart43}):&\qquad -(\mathop{b}\mathop{b}\mathop{b}) (\mathop{a} \mathop{a}\mathop{a}\mathop{a}^{\displaystyle\overset {u^{-1}}\downarrow}) (\mathop{b}^{\displaystyle\overset {v^{-1}}\downarrow}\mathop{b}\mathop{b})\\
(\ref{eq:ProofPart44}):&\qquad(\mathop{b}\mathop{b} \mathop{b}^{\displaystyle\overset u\downarrow}) (\mathop{a}^{\displaystyle\overset v\downarrow} \mathop{a}\mathop{a}\mathop{a}) (\mathop{b}\mathop{b}\mathop{b})\\
(\ref{eq:ProofPart45}):&\qquad -(\mathop{b}\mathop{b} \mathop{b}) (\mathop{a}^{\displaystyle\overset v\downarrow} \mathop{a}\mathop{a}\mathop{a}) (\mathop{b}^{\displaystyle\overset {u^{-1}}\downarrow}\mathop{b}\mathop{b})\\
(\ref{eq:ProofPart46}):&\qquad -(\mathop{b}\mathop{b} \mathop{b}^{\displaystyle\overset u\downarrow}) (\mathop{a} \mathop{a}\mathop{a}\mathop{a}^{\displaystyle\overset {v^{-1}}\downarrow}) (\mathop{b}\mathop{b}\mathop{b})\\
(\ref{eq:ProofPart47}):&\qquad (\mathop{b}\mathop{b}\mathop{b}) (\mathop{a} \mathop{a}\mathop{a}\mathop{a}^{\displaystyle\overset {v^{-1}}\downarrow}) (\mathop{b}^{\displaystyle\overset {u^{-1}}\downarrow}\mathop{b}\mathop{b})
\end{align*}
\endgroup

So, in total there are 16 sums which contribute to (\ref{eq:AppAntisymmetricity}). Again we will prove that coefficient with any monomial is zero. To do so, construct a graph similar to one described in (\ref{eq:generalquadruples}). In this case we should also take care of relations $uu^{-1}=u^{-1}u=1$ and $vv^{-1}=v^{-1}v=1$ which appears to be quite simple. First we assume that $H_1$ and $H_2$ are already reduced as elements of the cyclic space. So, in particular, they do not contain neighboring $v$ and $v^{-1}$. Next, we should connect quadruples of the form $(*,*,v,v^{-1})$ and $(v^{-1},v,*,*)$ with an edge as well as similar pairs for $(*,*,v^{-1},v)$, $(*,*,u^{-1},u)$, and $(*,*,u,u^{-1})$.

Indeed, say we encounter $a_1a_2\dots a_jb_k\dots b_{k-2}b_{k-1}a_{j+1}\dots a_N$ with $b_{k-1}=v^{-1}$ and $a_{j+1}=v$. Since $b_{k-1}a_{j+1}=v^{-1}v=vv^{-1}=a_{j+1}b_{k-1}$ the same monomial is given by the following combination\newline $a_1a_2\dots a_{j+1}b_{k-1}b_k\dots b_{k-2}a_{j+1}\dots a_N$. Here it is important that this flip is always possible when there is no regular edge, indeed, this would imply that $b_k=a_{j+1}=b_{k-1}^{-1}$ which contradicts the assumption of $H_1$ and $H_2$ are presented in reduced form.

The graph for this case consists of 792 edges, so we do not present it here. However it still satisfies the property that the sum of contributing coefficients over each loop is equal to zero.

Finally, it is worth noticing that one can define a function $f$ on the vertices of the above graph by taking the sum of the coefficients on any path from 0 to the given vertex. (Like in (\ref{eq:ProofPotential})). There is a nice formula for this function
\begin{align*}
f(x_1,x_2,x_3,x_4)=\left\{\begin{array}{ll}
\deg x_2,&x_2=x_4,\\
\deg x_3,&x_3x_4=1,\\
0,&\textrm{otherwise}.
\end{array}
\right.
\end{align*}
Here we assume $\deg v=\deg u^{-1}=+1$ while $\deg u=\deg v^{-1}=-1$.

\bibliographystyle{alpha}
\bibliography{references}

\end{document}